\documentclass[lettersize,journal]{IEEEtran}
\usepackage{amsmath,amsfonts}
\usepackage{array}
\usepackage{textcomp}
\usepackage{stfloats}
\usepackage{url}
\newtheorem{problem}{\textbf{Problem}}
\usepackage{verbatim}
\usepackage{graphicx}
\usepackage{color}
\hyphenation{op-tical net-works semi-conduc-tor IEEE-Xplore}
\newtheorem{lemma}{\textbf{Lemma}}  
\newtheorem{proof}{\textbf{Proof}} 
\usepackage{cite}
\usepackage{subfigure}
\usepackage{caption}
\usepackage[linesnumbered,ruled,vlined]{algorithm2e}
\begin{document}

\title{DDPS: Dynamic Differential Pricing-based Edge Offloading System with Energy Harvesting Devices}

\author{ Hai Xue, \IEEEmembership{Member IEEE}, Yun Xia, Neal N. Xiong, \IEEEmembership{Senior Member IEEE}, Di Zhang, \IEEEmembership{Senior Member, IEEE}, Songwen Pei, \IEEEmembership{Senior Member IEEE},
\thanks{This work was supported in part by the National Natural Science Foundation of China (NSFC) under grant 61975124, and the Joint Funds of NSFC under grant U22A2001. \emph{(Corresponding author: Songwen Pei.)}

Hai Xue, Yun Xia, and Songwen Pei are with the School of Optical-Electrical and Computer Engineering, University of Shanghai for Science and Technology, Shanghai 200093, China (e-mail: hxue@usst.edu.cn, xiadayun99@gmail.com, swpei@usst.edu.cn). 

Neal N. Xiong is with the Department of Computer Science and Mathematics, Sul Ross State University, Alpine, TX 79830 USA (e-mail: xiongnaixue@gmail.com).

Di Zhang is with the School of Electrical and Information Engineering, Zhengzhou University, the Henan International Joint Laboratory of Intelligent Health Information System, the National Telemedicine Center, and the National Engineering Laboratory for Internet Medical Systems and Applications, Zhengzhou 450001, China (e-mail: dr.di.zhang@ieee.org).
}} 

\maketitle

\begin{abstract}

Mobile edge computing (MEC) paves the way to alleviate the burden of energy and computation of mobile users (MUs) by offloading tasks to the network edge. To enhance the MEC server utilization by optimizing its resource allocation, a well-designed pricing strategy is indispensable. In this paper, we consider the edge offloading scenario with energy harvesting devices, and propose a dynamic differential pricing system (DDPS), which determines the price per unit time according to the usage of computing resources to improve the edge server utilization. Firstly, we propose an offloading decision algorithm to decide whether to conduct the offloading operation and how much data to be offloaded if conducted, the algorithm determines offloading operation by balancing the energy harvested with the energy consumed. Secondly, for the offloading case, we formulate the game between the MUs and the server as a Stackelberg game, and propose a differential pricing algorithm to determine the optimal computing resources required by MUs. Furthermore, the proposed algorithm also reallocates computing resources for delay-sensitive devices while server resources are surplus after the initial allocation, aiming to make full use of the server computing resources. Extensive simulations are conducted to demonstrate the effectiveness of the proposed DDPS scheme.

\end{abstract}

\begin{IEEEkeywords}
Mobile edge computing, pricing strategy, resource allocation, Stackelberg game.
\end{IEEEkeywords}

\section{Introduction}

\IEEEPARstart{I}{nternet} of things (IoT) is a transformative technology that empowers large-scale, resource-constrained devices with software programming, sensors, and network technology, which brings us closer to the realization of internet of everything. According to ITU-R WP5D, the ubiquitous and intelligent IoT will be the main focus of the forthcoming 6G research towards 2035. However, when mobile IoT devices are deployed in remote areas and resource-scarce environments (e.g., forests, deserts, and oceans), they often struggle to support applications demanding substantial computation resources and energy.  Although mobile IoT devices are becoming more and more powerful, their computing and battery power capabilities are still insufficient due to the limitation of physical size, which poses a great challenge to perform computation-intensive tasks on IoT devices \cite{HHJKS, ZLXQ, YZGST}.

Mobile edge computing (MEC) is expected as an effective solution to address the feeble computing capability and limited battery power challenges by offloading computational tasks from mobile IoT devices to MEC servers. MEC deploys servers with abundant computational resources at the network edge, minimizing the requirement of long-distance data transmission,  which results in low latency and prolongs operational longevity for mobile IoT devices. Nonetheless, the IoT device necessitates periodic recharging or battery replacement, particularly in remote areas which incurs substantial financial costs. Therefore, in this paper, we consider the edge offloading scenario with energy-harvesting devices to provide a sustainable power source for mobile IoT devices, for which not only mitigates the requirement for frequent manual battery replacements but also extends the overall lifetime of these mobile IoT devices.

Beyond the consideration of solar power, terminal devices also have to prioritize cost minimization. While terminal devices offload tasks to edge servers, they must carefully evaluate the trade-off between offloading operation and local execution costs to maximize their utility. To this end, questions such as whether to conduct offloading operation, how much data to be offloaded, and the optimal amount of computing resources to request from the edge server become paramount. Simultaneously, it is vital to foster an incentive mechanism to stimulate edge servers to willingly engage the task process for edge devices. Without adequate motivation, servers are disinclined to allocate resources to process the offloaded computation-intensive tasks. To this end, our focus shifts to the development of pricing strategies to incentivize server participation. That is to say, it is challenging to develop a trade-off pricing scheme between mobile users (MUs) and edge server while fully utilizing the server computing resources.

Most of the existing pricing schemes denoted the cost as an optimal constant per CPU cycle \cite{XWSZX, JSLY, CJZ},  and the unit price remains consistent with each unit of data length.  In other words,  the payments are directly proportional to the quantity of offloaded data with these existing schemes.  That is, if multiple users\footnote{ Unless stated otherwise, we use the terms MUs, terminal devices, and users, interchangeably.} offload the same amount of data and process identical applications, they would pay identical charges, irrespective of their computational demands on the server. To this end, MUs may be inclined to occupy substantial computational resources on the server. However, from the perspective of the server side, it is imperative to consider each user's utilization of computation, given the finite nature of server resources. Elastic pricing schemes that consider both computing resource usage and the volume of offloaded data alleviate the concerns of resource over-utilization. As a consequence, pricing strategies should encompass considerations of computing resource usage alongside offloaded data volumes.

Therefore, in this paper, we propose an effective dynamic differential pricing system with energy harvesting devices (DDPS). The proposed DDPS scheme considers each MU usage of server computing resources when determining the unit price. At the meantime, energy-harvesting devices are also utilized to calculate the optimal data offloading volume for MUs. For the game of terminal device and server, we employ the Stackelberg game to establish a cooperative decision-making process, which involves determining the computation resources required by the terminal devices and devising a corresponding pricing strategy for the server.

The contributions of this paper are summarized as follows.
\renewcommand\labelenumi{\theenumi)} 
\begin{enumerate}
    \item We firstly propose an efficient task offloading decision algorithm, which incorporates energy harvesting capability into the sensor devices to determine the volume of data offloaded based on the amount of harvested energy.
    
    \item  We propose a dynamic differential pricing algorithm taking into account each MU’s usage of server computing resources.  For the proposed pricing algorithm, the unit price is not fixed at a specific value, but a variable function of CPU usage of MU. Exactly speaking, MU should pay in proportion to the usage of server computing resources as well as the amount of offloaded data.

    \item  We establish a single-server multi-user Stackelberg game to determine the optimal computing resources required by MUs and the optimal server pricing. In addition,  we also propose a redistribution strategy for the situation where the server has surplus resources to maximize the server's utility.

    \item Extensive simulations are conducted to verify the effectiveness of the proposed DDPS scheme, the simulation results demonstrate that our proposed scheme enhances the utility of server, MUs ratio of service, and reduces the average delay of MUs.
    
   
\end{enumerate}

The rest of this paper is organized as follows.  In Section II, we present the related work. In Section III, the proposed system model is illustrated, including the model of energy consumption and delay. Section IV exhibits the problem formulation and solutions. In Section V, simulation results are presented and discussed. The conclusion is given in Section VI.

\section{Related work}

In the realm of MEC systems, extensive research has been conducted on resource allocation and pricing strategies to incentivize edge servers to serve users. We divide them into 3 categories from different perspectives as follows.

\subsection{Game pricing}
First of all, a common method is applying the Stackelberg game to address challenges related to resource provisioning and pricing between the MUs and server.
Tao $et$ $al.$ \cite{MKMH} utilized a Stackelberg game framework to establish the relationship between MEC server resources pricing and offloading data volume. The authors employed a differential evolution algorithm to seek the optimal pricing strategies.
Chen $et$ $al.$ \cite{YZBKK} decomposed the multi-resources allocation and pricing problem into sub-problems, constructing Stackelberg games for each sub-problem. The authors proposed an iterative algorithm to find equilibrium prices.
Li $et$ $al.$ \cite{YLYDY} introduced a novel multi-leader single-follower Stackelberg game, where each leader sets an optimal price based on the computation resources required by followers.
Qin $et$ $al.$ \cite{W} modeled the interaction between MEC servers and vehicles as a Stackelberg game, presenting a dynamic iterative algorithm to find Nash equilibrium for pricing determination.
Wang $et$ $al.$ \cite{MLPXKK} framed the interaction between UAV-MEC servers and MUs as a Stackelberg game, developing an arithmetic descent-based MRIG algorithm for computing resource price.
Mitsis $et$ $al.$ \cite{GEES} modeled user data offloading decisions as a non-cooperative game and employed semi-autonomous game methods or fully autonomous reinforcement learning to obtain optimal computing service price.
Li $et$ $al.$ \cite{QHTCY} formulated the resources management and pricing problem between MEC server and IoT devices as a bidirectional auction game, introducing the EWA algorithm to incorporate artificial intelligence for adaptive learning of optimal price.

\subsection{Auction pricing}
A large body of literature has used auction theory to study resource pricing in the context of MEC.
Hai $et$ $al.$ \cite{THP} addressed the resource allocation challenge between MUs and microclouds. The authors proposed a suitable auction mechanism to determine the price charged by microclouds to MUs.
Shen $et$ $al.$ \cite{ZJH} delved into edge cloud pricing issues, introducing an FPTAS auction mechanism to achieve socially optimal welfare.
Ng $et$ $al.$ \cite{JS} presented a full-payment auction mechanism to incentivize edge devices to actively participate in encoding computational tasks. For this mechanism, the bid from each edge device indicates its CPU capability, influencing the allocation of computational tasks.
Wang $et$ $al.$ \cite{QSJCL} firstly established the relationship between the resources offered by edge clouds and the charges to MUs in a non-competitive environment. Subsequently, in a competitive environment, the authors designed an online PMMRA auction mechanism, effectively determining the price paid by MUs for the resources provided by MEC servers.
Sun $et$ $al.$ \cite{WJYP} proposed two auction mechanisms, DAMB and BFDA, where MUs declare bids when requesting multi-task services, and edge servers collaboratively provide services to MUs.
Wu $et$ $al.$ \cite{BXYY} designed an auction mechanism for a MEC system comprising multiple MUs and a service provider. They presented a precise algorithm to maximize social welfare and a perturbation-based randomized allocation algorithm to achieve an approximate (1-$\alpha$) optimal social welfare, demonstrating the effectiveness of their auction mechanism.
Ma $et$ $al.$ \cite{LXXLYM} introduced a TCDA auction mechanism, providing distinct pricing strategies based on critical values and VCG mechanisms for MUs and edge servers. 
Wang $et$ $al.$ \cite{RCPYD} devised an auction pricing strategy where the highest bidder in each round could perform task offloading, confirming the efficacy of this strategy.
Su $et$ $al.$ \cite{YWYF} proposed a TCA auction mechanism and verified its effectiveness in addressing resource allocation and pricing challenges.

\begin{figure}[!t]
\captionsetup{singlelinecheck = false, justification=justified}
\centering
\includegraphics[width=3in]{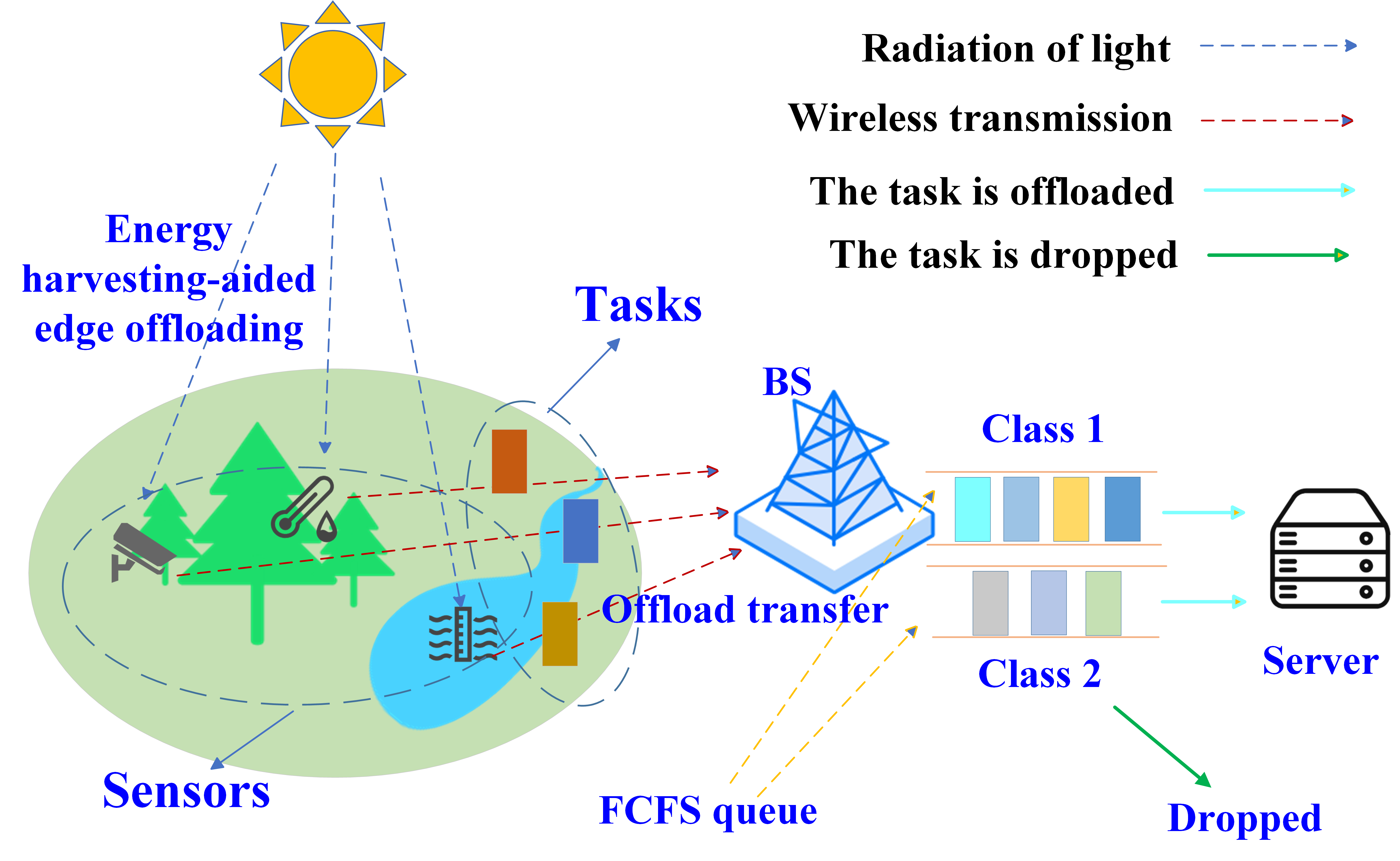}
\caption{System model.}
\label{fig1}
\end{figure}

\subsection{Dynamic Pricing Based on Physical Layer Parameters}
While existing pricing works often rely on abstract concave utility functions, some literature explores dynamic differential pricing based on MUs' physical layer parameters, contributing to a more nuanced understanding of resource pricing in MEC systems.
Han $et$ $al.$ \cite{DWY} utilized idle computation resources from parked vehicles in MEC systems, the authors dynamically adjusted price based on the current system state to minimize average costs.
Chang $et$ $al.$ \cite{ZLCYHY}  employed binary search to find optimal offloading delay to guarantee the processing latency, and dynamically adjusting price within the latency range.
Liu $et$ $al.$ \cite{MY} proposed unified and differential pricing strategies based on the extent of edge cloud awareness of network information.
Seo $et$ $al.$ \cite{HHJKS}, Liang $et$ $al.$ \cite{BRHYNA}, and Kim $et$ $al.$ \cite{SSMC} determined pricing dynamically based on MU utilization of server computing resources, with different approaches such as linear, quadratic, and profit-maximization formulations.

Although existing pricing schemes yielded excellent success, they still suffer from several issues. Most of these schemes\cite{XWSZX, JSLY, CJZ, MY, SSMC} define cost as a fixed optimal value per CPU cycle, resulting in a consistent unit price for each data length unit. That is, the payment amount becomes directly proportional to the data volume, disregarding MUs' computational demands on the server when multiple users offload and process equal amounts of data and applications. This may result in excessive utilization of computing resources on the server. Some papers \cite{LMZTQSQ, QL, WLPXKK} fail to consider both the quantity of offloaded data and computing resources usage, instead relying on an abstract concave utility function to uniformly price all MUs seeking server access. In other words, regardless of their specific requirements for offloaded data or computing resources used, all MUs are charged equally to maximize total social welfare. By doing so, this approach leads to reduced server utility and user satisfaction due to its inherent unfairness. From a server perspective with limited computing resources available, it is crucial to consider individual user utilization of these resources. Elastic pricing schemes help alleviate resource over-utilization problems by dynamically adjusting prices based on demand fluctuations.


\section{System model}

As depicted in Fig. \ref{fig1},  a single server and multiple MUs are considered in this paper, and we assume that all the tasks follow the FCFS rule in the queue.  MUs are charged for utilizing server computing resources, as long as MEC servers are available to handle the tasks in the queue. We denote that the amount of CPU resources utilized by a MU is decided by the maximum delay. Upon the task arrival, it first enters the first level queue (i.e., Class 1). if the server does not have enough computation resources to execute the current task under the delay constraint, the task is forwarded into the second level queue (i.e., Class 2) and waits for service, so that Class 1 generates a game between the subsequent task and the server \cite{LMZTQSQ}. At the beginning of the next time slot, the server assigns priority to serve tasks in Class 2, and then returns back to Class 1. For the task in Class 2, if the computing resources are still not enough, it continues to wait until the task exceeds the maximum delay and is dequeued. For the sake of convenience, Table I summarizes the important notations in this paper.

\begin{table}[!t]
\renewcommand{\arraystretch}{1.3}
\caption{SUMMARY OF  PRIMARY NOTATIONS}
\label{table_1}
\centering

\begin{tabular}{ c | >{\centering\arraybackslash}m{7cm}}
\hline

\textbf{Symbol} & \multicolumn{1}{c}{\textbf{Description}} \\
\hline
    $l_m$ &  The maximum amount of data offloading of the user\\
\hline
    $l_i$ &  The amount of data generated by the user\\
\hline
    $q_o$ &  The optimal data offloading amount of the user\\
\hline
$q_i$ & The amount of data offloaded by the $i$th user\\
\hline
$F_t$ & The remaining computing resources of the server\\
\hline
$F_i$ & The computing resource requested by $i^{th}$ user from server\\
\hline
    $F_{loc}^i$ &  The  local computing capability of MU $i$\\
\hline
    $h$ &  The number of cycles required to process a bit\\
\hline
    $A$ &  The area of the solar energy harvesting device\\
\hline
    $H$ &  The average amount of solar radiation\\
\hline
$\eta$ & The efficiency of the solar device\\
\hline
$k_a$ & The correction factor\\
\hline
$E_h$ & The energy harvested by a solar installation\\
\hline
$E_u$ & The amount of energy consumed by uploading task\\
\hline
$E_d$ & The amount of energy consumed by downloading  task\\
\hline
$k$ & The effective switched capacitance\\
\hline
$\mu$ & The service rate of the queue\\
\hline
$\lambda$ & The task arrival rate\\
\hline
$t_{req}^i$ & The delay requirement to be satisfied by the $i^{th}$ MU\\
\hline
$X$ & The pointer to the  Class 1 queue\\
\hline
$Y$ & The pointer to the  Class 2 queue \\
\hline
\end{tabular}
\end{table}

\subsection{MU energy consumption model}
 We assume that the total amount of data per task is $l_i$ bit, the amount of offloaded data is $q_i$ bit, and the server takes $h$ cycles to process a bit of data. Therefore, the energy consumption of local execution is expressed by the following formula \cite{RCPYDD}.
\begin{equation}
    E_{loc}^i=k \cdot (l_i-q_i)h \cdot F_{loc}^2.
    \label{eq_1}
\end{equation}
where $k$ denotes the effective switched capacitance based on the chip architecture, and $F_{loc}$ denotes the CPU capacity of the MU. We also assume that the achievable uplink and downlink transmission rates from the user to the server are $R_u$ and $R_d$, respectively.  Since the data size is different from the original size after execution if the amount of task $q_i$ is offloaded to the server, and we denote the ratio as $r$ \cite{YMXLJ},  then, the downlink data is $q_ir$. To sum up, the uplink transmission time $t_{up}$ and the downlink transmission time  $t_{down}$ are expressed as follows. 
\begin{equation}
    t_{up}=\frac{q_i}{R_u}. 
\end{equation}
\begin{equation}
    t_{down}=\frac{q_ir}{R_d}.
\end{equation}

Assume that the upload and download transmission power are $P_u$ and $P_d$, respectively. Based on this, the upload transmission energy consumption $E_u$ and the download transmission energy consumption $E_d$ are accordingly expressed as follows.

\begin{equation}
\begin{aligned}
    E_u&=P_u t_{up}\\
    &=P_u \frac{q_i}{R_u}.
    \label{eq_4}
\end{aligned}
\end{equation}
\begin{equation}
\begin{aligned}
    E_d&=P_d t_{down}\\
    &=P_d \frac{q_ir}{R_d}.
    \label{eq_5}
\end{aligned}
\end{equation}

Therefore,  the total energy consumption $E_{tot}$ of MU $i$ per time slot $t$ is obtained as follows.
\begin{equation}
    E_{tot}^i(t)=E_u^i(t) + E_{loc}^i(t) + E_d^i(t).
\end{equation}

In the meantime, the battery itself cannot supply power to the device for a long time since we assume all MUs are difficult to replace batteries in remote areas. However, energy harvesting devices are deployed as aforementioned, and the energy conversion $E_h$  is defined by the following equation.
\begin{equation}
    E_h^i(t)=AH\eta k_a.
    \label{eq_7}
\end{equation}
where $A$ is the area of the solar panel, $H$ is the average amount of solar radiation, $k_a$ $\in$ [0, 1] is the correction factor, and $\eta$ $\in$ (0,1) is the efficiency of the solar device. It should be noted that correction factors are added to represent the actual energy absorbed by energy devices due to objective reasons such as weather, shelter, etc. To sum up, the energy surplus $E_r$ per time slot $t$ of the device is expressed by the following formulas.
\begin{align}
    E_r^i(t)&=E_b^i(t)+E_h^i(t)-E_{tot}^i(t).\label{eight}\\
    \textbf{s.t.} &\quad E_r^i(t) \geq 0,\tag{\ref{eight}{a}}\\
    &\quad E_r^i(t) \leq E_b.\tag{\ref{eight}{b}}
\end{align}
where $E_b$ is the battery capacity of  MU $i$.

\subsection{Queue model}
Assume that the task arrival follows Poisson distribution and the probability function is expressed as the following formula.
\begin{equation}
    P(N=n)=\frac{e^{-\lambda}}{n!}\lambda^n.
\end{equation}
where  $\lambda$ is the task arrival rate and the $N$ is the number of tasks. The average waiting delay $t_w$ is calculated as follows.

\begin{equation}
    \begin{aligned}
    t_w&=\frac{1/\mu}{1-\lambda/\mu}-\frac{1}{\mu}\\
    &=\frac{\lambda/\mu^2}{1-\lambda/\mu}.
\end{aligned}
\end{equation}
where $\mu$ is the service rate of the queue which represents the number of tasks in the queue processed per unit time. For the proposed scheme, it is assumed that the task is completed within the maximum delay. So,  the average completion time $t_{ave}$ is approximately denoted as the mean value of the maximum delay of all MUs. $t_{ave}$ is derived as follows.
\begin{align}
    \mu &= \frac{\max N}{t_{ave}}.\label{eleven} \\
     \textbf{s.t.}& \quad F_t \leq \sum_{i=1}^NF_i. \tag{\ref{eleven}{a}}
\end{align}

\begin{equation}
    t_{ave}=\frac{\sum_{i=1}^{\max N}t_{req}^i}{\max N}.
\end{equation}
where $t_{req}^i$ is the delay requirement to be satisfied by the  $i^{th}$ MU, and $F_i$ is the server CPU capacity utilized by the $i^{th}$ MU,  $F_t$ denotes the available  CPU capacity of the server.  $F_i$ and $F_t$ are illustrated in subsection $C$ detailedly.

\subsection{Pricing model for the server}
Denote $F_i$ as the CPU capacity utilization by the $i^{th}$ MU, and the processing time $t_p^i$ at the server is shown as follows.
\begin{equation}
    t_p^i=\frac{hq_i}{F_i}.
    \label{eq_13}
\end{equation}

Here, a pricing bidding function is proposed which considers the CPU utilization to set the price per unit time. The principle of defining the bidding function is that the function must be an increasing function \cite{HHJKS}. The bidding feature should become more expensive as the user consumes more CPU capacity, and the server should be set at a higher price to ease the pressure on the server as it has fewer available resources. Let $F_t$ denote the available server CPU capacity, the unit price is defined as the product of the pressure and the resources consumed, which is illustrated as follows.
\begin{equation}  
    w_i=\frac{F_i}{F_t}f(F_i).
\end{equation}
where $w$ is the unit price. Finally, the total payment is the product of the price per unit time and the processing time, which is calculated as follows.
\begin{equation}
    \begin{aligned}
         W_i &= wt_p^i \\
           &= \frac{hq_i}{F_t}f(F_i)  .
    \end{aligned}
\end{equation}
where $W_i$ is the payment of the $i^{th}$ MU.  In addition,  a discount strategy exists when the MU uses more server resources, we adopt the logarithmic function\footnote{The $lg(\cdot)$ function we priced is based on a logarithm with a base of 10, this is because computational resources are usually represented in scientific notation, and which is convenient for calculation.} as the price bidding function \cite{YZBKK, HHJKS}. Then, the payment $W_i$ is expressed as 
\begin{equation}
    \begin{aligned}
     W_i = \frac{hq_i}{F_t} \lg (F_i +d).
     \label{16}
    \end{aligned}
\end{equation}

Here, $d$ should not be smaller than 1  since the payoff is always positive. It is important to note that $W_i$ should not be too large as well as $w_i$, otherwise, the user pays too much and abandons the task processing, which increases the packet loss rate as well as the corresponding penalty function increases. The penalty function $P$ is defined hereinafter.
\begin{lemma}
    The payment $W_i$ is an increasing function concerning the offload data size $q_i$ and the CPU capacity of the server utilized by MUs $F_i$, respectively.
\end{lemma}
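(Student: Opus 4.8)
The plan is to treat $W_i = \frac{h q_i}{F_t}\lg(F_i + d)$ as a function of the two independent variables $q_i$ and $F_i$, regarding $h$, $F_t$, and $d$ as fixed positive parameters (recall that $h>0$, $F_t>0$, and $d\ge 1$ by assumption), and to establish monotonicity in each variable separately by showing that the corresponding partial derivative is positive. Since the claim decomposes cleanly into two independent one-variable computations, I would carry them out one after the other.

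First I would fix $F_i$ and differentiate with respect to $q_i$. Because $q_i$ enters only through the linear prefactor, the partial derivative is
\begin{equation}
\frac{\partial W_i}{\partial q_i}=\frac{h}{F_t}\lg(F_i+d).
\end{equation}
The prefactor $h/F_t$ is strictly positive, so the sign is governed entirely by $\lg(F_i+d)$. Here I would invoke the standing assumptions $F_i\ge 0$ and $d\ge 1$, which force $F_i+d\ge 1$ and hence $\lg(F_i+d)\ge 0$, with strict positivity whenever a user actually requests resources (i.e.\ $F_i>0$). Thus $\partial W_i/\partial q_i>0$, so $W_i$ is increasing in $q_i$.

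Second I would fix $q_i$ and differentiate with respect to $F_i$. Writing the base-ten logarithm as $\lg(F_i+d)=\ln(F_i+d)/\ln 10$ and applying the chain rule gives
\begin{equation}
\frac{\partial W_i}{\partial F_i}=\frac{h q_i}{F_t}\cdot\frac{1}{(F_i+d)\ln 10}.
\end{equation}
Every factor here is strictly positive: $h q_i/F_t>0$, $\ln 10>0$, and $F_i+d>0$. Hence $\partial W_i/\partial F_i>0$, so $W_i$ is increasing in $F_i$, completing the argument.

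I do not expect a genuine obstacle in this proof; the only point requiring care is the sign of $\lg(F_i+d)$ in the first computation, which is precisely why the hypothesis $d\ge 1$ (guaranteeing both a positive payment and a nonnegative logarithm) is essential rather than cosmetic. Everything else reduces to routine differentiation of an affine-times-logarithm expression.
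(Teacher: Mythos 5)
Your proposal is correct and follows essentially the same route as the paper's proof: computing the two partial derivatives $\frac{\partial W_i}{\partial q_i}=\frac{h}{F_t}\lg(F_i+d)$ and $\frac{\partial W_i}{\partial F_i}=\frac{hq_i}{F_t}\cdot\frac{1}{(F_i+d)\ln 10}$ and checking their signs. Your treatment is in fact slightly more careful than the paper's, since you explicitly invoke $d\ge 1$ to guarantee $\lg(F_i+d)\ge 0$, a point the paper glosses over.
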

    
\begin{proof}
$h$ is the required number of CPU cycles for one-bit data, and $F_t$ is the total CPU capacity of the server, they should be non-negative numbers by definition. The partial derivatives of payment concerning $q_i$ and $F_i$ are as follows:

\begin{equation}
    \frac{\partial W_i}{\partial q_i}=\frac{h}{F_t}\lg(F_i+d)\geq 0.
\end{equation}
\begin{equation}
    \frac{\partial W_i}{\partial F_i}=\frac{hq_i}{F_t}\frac{1}{(F_i+d)\ln10} \geq 0.
    \label{18}
\end{equation}
\end{proof}

For either $q_i$ or $F_i$, their first partial derivatives are greater than zero, so $W_i$ is an increasing function.

\begin{figure}[tp]
\captionsetup{singlelinecheck = false, justification=justified}
\centering
\includegraphics[width=3in]{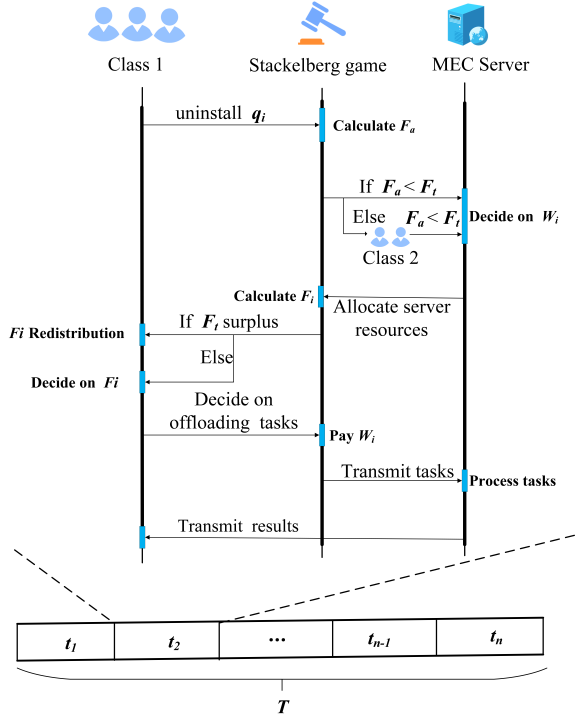}
\caption{The procedure of Stackelberg game.}
\label{fig2}
\end{figure}

\section{The proposed DDPS}
In this section, we model the game between the server and MUs as a Stackelberg game with a single leader and multiple followers, where the server acts as a leader and multiple MUs act as followers. As depicted in Fig. \ref{fig2}, which exhibits the procedure of the established Stackelberg game between MUs and MEC server.  When a MU intends to offload a task, it firstly provides task-related information, including its  CPU capacity ($F_{loc}$), the required server CPU capacity ($F_i$), the minimum latency requirement ($t_{req}$), and the data size to be processed ($l_i$). Subsequently, the server evaluates the information to decide whether it can accommodate the task. If acceptable, the server then informs the user about the remaining server resources and quotes an appropriate service price to the user. According to the game rule we defined, aiming to maximize server utility function, when all MUs enter the server queue within a time slot and there are remaining server resources, the surplus computing resources are allocated based on the ratio of each MU's offloaded data amount to the total server data amount, and then quote the higher price accordingly to increase user average expenditures. 

Subsequently,  a cost function for the user and a utility function for the server is formulated, respectively. Each user selfishly minimizes its cost function and the server selfishly maximizes its utility function, separately.

\subsection{The utility function of MU and the optimal policy}
The utility function of MU  is obtained by subtracting the payment to the edge server from the profit of energy saving by offloading tasks.  So, the utility function $U_{user}^i$ of the MU is expressed as follows.

\begin{equation}
    U_{user}^i=\max E_r^i + \min W_i.
    \label{eq_19}
\end{equation}

Subsequently, the execution latency is modeled, for which a partial offloading model is considered \cite{LSYZ}. The MU offloads part of the data and the $l_i$ is processed both on the server and locally. To this end, the execution delay is expressed as the higher value between the time taken for offloading $t_{off}$ and the local processing time $t_{loc}$, which is expressed as follows.
\begin{equation}
     t_e=\max (t_{loc}, t_{off}).
     \label{eq_20}
\end{equation} 

The time $t_{loc}$ required for all tasks to be executed locally is shown as follows.
\begin{equation}
    t_{loc}=\frac{(l_i-q_i)h}{F_{loc}}.
\end{equation}

The time consumed for processing the offloaded data is set as the sum of waiting delay $t_w$, uplink data transmission time $t_{up}$, processing time on the server $t_p$, and downlink data transmission time $t_{down}$.
\begin{equation}
    t_{off}=t_{up}+t_p+t_{down}+t_w.
\end{equation}

Since the downlink transmission data is less than the uplink transmission data $q_i$, and the time consumed for offloading is expressed as follows.
\begin{equation}
    \begin{aligned}
    t_{off}&=\frac{q_i}{R_u} + \frac{rq_i}{R_d} +t_p+t_w\\
            &=(\frac{1}{R_u}+\frac{r}{R_d} +\frac{h}{F_i})q_i+\frac{\lambda/\mu^2}{1-\lambda/\mu}.
    \label{eq_23}
\end{aligned}
\end{equation} 

When the offloading delay exceeds the delay tolerance constraint of the task, then, the task is dropped and the penalty function $P$ of packet loss is given by the following equation \cite{DWY}. 
\begin{equation}
    P=\gamma\sum_{i=0}^{M}W_i.
    \label{eq_24}
\end{equation}
where $M$ is the number of dropped packets and $\gamma$ $\in$ (0, 1) is a weight coefficient. \textbf{It should be noted that since the discarded tasks do not play a game with the server, the optimal $F_i$ cannot be determined. Here, we calculate the $F_i$ based on the maximum delay tolerance as the pricing directly.}

\begin{algorithm}[!t]
    \SetAlgoLined
    \caption{Offloading decision\protect\footnotemark }\label{alg:alg1}
     \KwIn{$t_{req}$}
     \KwOut{$X$}
     \While{X!=null}{
     Enter the $X.l$ \;
    Get the $E_h$ according to Eq. (\ref{eq_7})\;
    According to \textbf{Lemma2}, \textbf{Lemma3} and \textbf{Lemma4}, get $l_c$, $q_o^{opt}$, $l_m$.\;
    \eIf{$0<X.l<l_c$}{
        compute locally\;
    }{
        \eIf{$l_c \leq X.l < q_o^{opt}$}{
        offload partially\;
        calculate the X.F according to the $t_{req}$\;
        calculate the X.t according to the Eq. (\ref{eq_20});
        }{
            \eIf{$q_o^opt \leq X.l \leq l_m$}{
            calculate the X.F according to the $t_{req}$\;
            calculate the X.t according to the Eq. (\ref{eq_23});
            }{
            drop;
            }
        }
    }
    X=X.next;
    }
\end{algorithm}
\footnotetext{When designing the algorithm, for the variables involved in the table, we omitted the subscripts and used Pointers to represent the $i^{th}$ user.}

Aiming to obtain the optimal strategy for the MU, a budgeted energy offloading strategy is proposed based on the pricing strategy of the server. $U_{server}^i$ consists of two functions, one is to maximize the residual energy $E_r^i$, and the other is to minimize the expenditure $W_i$. For the former, according to the amount of harvested energy $E_h^i$, the task is executed locally or offloaded, and then the optimal amount of offloading data $q_i^{opt}$ is obtained, which is referred to Lemma \ref{lemma2}, Lemma \ref{lemma3}, Lemma \ref{lemma4}. As a consequence, Eq. (\ref{eq_19}) is converted into the following optimization problem.

\begin{problem}
    Optimal Offloading Decision
\end{problem}

\begin{equation}
    \begin{aligned}
    \label{eq_25}
    U_{user}^i&=\min W_i\\
    &= \min \frac{hq_i}{F_t} \lg (F_i +d).
\end{aligned}
\end{equation}
\begin{equation}
    \textbf{s.t.} \quad 0\leq q_i\leq l,\tag{\ref{eq_25}{a}}
\end{equation}
\begin{equation}
        \quad \quad \quad 0\leq F_i\leq F_t,\tag{\ref{eq_25}{b}}
\end{equation}
\begin{equation}
    \quad \quad \quad 0\leq t_e \leq t_{req}^i.\tag{\ref{eq_25}{c}}
\end{equation}

According to Eq. (\ref{eq_25}), it is an increasing function concerning $q_i$,  which implies offloading data as little as possible. On the contrary, $W$ is a decreasing function concerning $F_i$, which indicates the selection of the minimum computing resources. However, fewer computation resources required lead to longer computing time. Therefore, it is necessary to complete the task within the maximum delay tolerance. To this end, MUs prefer to offload the smallest amount of data and less computing resources to be optimal for them.  However, since each MU tends to use fewer server resources, it takes a longer time to complete the task. In addition, it causes slow task turnover and less profit for the server, so the server does not allow each MU to achieve the minimum computing resources to maximize its profit. 

To solve the optimal offloading decision problem (i.e., Problem 1), we proposed Algorithm \ref{alg:alg1} to determine the eligibility of offloading conditions for the user. According to the harvested energy, and considering its maximum task delay and cost factors, the MU decides whether to execute locally or partially offload, and its time complexity is $O(n)$.


\begin{lemma}\label{lemma2}
    We define $l_c$ as the critical amount of data whether to offload or not.
\end{lemma}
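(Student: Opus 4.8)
The plan is to treat this statement as the derivation of a closed-form threshold $l_c$ from the energy-feasibility condition of the device, rather than as an abstract existence claim. The governing idea is that $l_c$ is the largest task size the MU can process \emph{entirely on its own} without violating the non-negative energy surplus constraint (\ref{eight}a). Below this size there is no energy pressure forcing the device to offload, so the cheaper option is pure local execution (it avoids any payment $W_i$); at or above $l_c$, full local execution would drive $E_r^i$ negative, so at least partial offloading becomes mandatory. This is exactly the dichotomy that Algorithm~\ref{alg:alg1} uses when it branches on $0<X.l<l_c$.

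First I would specialize the total energy consumption to the pure-local regime by setting $q_i=0$ in (\ref{eq_1}), (\ref{eq_4}) and (\ref{eq_5}). This annihilates the transmission terms, $E_u=E_d=0$, and leaves $E_{tot}^i=E_{loc}^i=k\,l_i\,h\,F_{loc}^2$. Substituting into the surplus expression (\ref{eight}) yields $E_r^i=E_b^i+E_h^i-k\,l_i\,h\,F_{loc}^2$, with $E_h^i$ supplied by the harvesting model (\ref{eq_7}).

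Next I would impose feasibility, $E_r^i\ge 0$, and solve the resulting linear inequality for $l_i$. Since the coefficient $k\,h\,F_{loc}^2$ is positive (the constants are non-negative by the physical definitions also invoked in the preceding proof), $E_r^i$ is strictly decreasing in $l_i$, so the constraint is equivalent to an upper bound on the task size whose boundary is
\begin{equation}
    l_c=\frac{E_b^i+E_h^i}{k\,h\,F_{loc}^2}.
\end{equation}
This is precisely the point at which the device exhausts its available energy (battery plus harvested) on local computation, cleanly separating the "compute locally" region $l_i<l_c$ from the "must offload" region $l_i\ge l_c$.

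The main obstacle I anticipate is conceptual rather than computational: pinning down what "critical" encodes. One must justify that the relevant test is the energy \emph{feasibility} of full local execution against the harvested budget, and not a per-bit comparison between local and offloading energy—which, with the model's constants, is independent of $l_i$ and therefore cannot by itself define a data threshold. I would also verify that on $[0,l_c]$ the minimizer of the cost $U_{user}^i$ genuinely favors local execution (offloading incurs a strictly positive $W_i$ and extra transmission energy), so that the threshold is consistent with the utility-minimization objective and not merely with the raw energy-balance inequality.
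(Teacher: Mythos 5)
Your overall strategy is the same as the paper's: specialize to pure local execution by setting $q_i=0$, reduce the energy model to $E_{loc}=k\,l_i\,h\,F_{loc}^2$, and solve a linear energy-balance relation for the threshold. However, you balance against the wrong budget, and this changes the formula. The paper defines $l_c$ by the condition that the \emph{harvested} energy alone just covers full local execution, $E_h=E_{loc}$, which with Eq.~(\ref{eq_7}) gives
\begin{equation}
    l_c=\frac{AH\eta k_a}{k\,h\,F_{loc}^2},
\end{equation}
whereas you impose the surplus constraint $E_r^i\geq 0$ with the battery term retained and obtain $l_c=(E_b^i+E_h^i)/(k\,h\,F_{loc}^2)$. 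The discrepancy is not cosmetic. The design intent, stated in the abstract and realized throughout Section IV, is per-slot energy-neutral operation: since batteries cannot be replaced in remote deployments, $E_b^i(t)$ acts as a buffer subject to $0\leq E_r^i(t)\leq E_b$, not as a spendable budget, and a threshold that permits draining the battery is not sustainable over successive slots even though it satisfies the instantaneous feasibility inequality you invoke.

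Two further concrete problems follow from including $E_b^i$. First, your $l_c$ depends on the current battery state and is therefore time-varying, while the paper's $l_c$ is a fixed constant of the harvesting hardware; Algorithm~\ref{alg:alg1} branches on $0<X.l<l_c$ and Lemma~\ref{lemma4} uses the constraint $l-l_c\leq q_o^{opt}\leq l_m$, both of which presuppose a static threshold. Second, consistency with the companion lemmas breaks: Lemma~\ref{lemma3} defines $l_m$ via $E_h=E_u+E_d$ and Lemma~\ref{lemma4} balances $E_h\geq E_{loc}+E_u+E_d$, both using harvested energy only, so a battery-inclusive $l_c$ paired with battery-exclusive $l_m$ and $q_i^{opt}$ would make the partition of the offloading regimes incoherent. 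That said, your closing observation is sound and worth keeping: a per-bit energy comparison between local execution and offloading is independent of $l_i$ under this model and hence cannot define a data threshold, which is precisely why the threshold must come from a budget-feasibility condition — you simply chose a larger budget than the paper intends.
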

\begin{proof}
    When the harvested energy is just enough to process all the data locally, we define the data volume of the task as $l_c$. At this point, the following formula is obtained.
    \begin{equation}
        E_h=E_{loc},
    \end{equation}
    \begin{equation}
        AH\eta k_a=k \cdot l_ch \cdot F_{loc}^2,
    \end{equation}
    \begin{equation}
        l_c=\frac{AH\eta k_a}{k \cdot h \cdot F_{loc}^2}.
    \end{equation}
\end{proof}

In the meantime, we set $y$ as a binary variable, when $y$ equals 1, it means offloading. Otherwise, it means local processing. Then, we can obtain the following expression.
   
    \begin{equation}
        y= \left\{
        \begin{array}{ll}
        \text{1}, \quad \text{$l_i$ $\leq$ $l_c$}; \\
        \text{0}, \quad \text{$l_i$ $\geq$ $l_c$}. \\
        \end{array}
        \right.
   \end{equation}

\begin{lemma}\label{lemma3}
    When all the energy collected by the MU is used to offload the data, the amount of data offloaded by the MU reaches the maximum $l_m$ at this time.
\end{lemma}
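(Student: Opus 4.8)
The plan is to mirror the proof of Lemma~\ref{lemma2}, replacing the local-execution energy balance with a transmission-energy balance. The governing idea is that the harvested energy $E_h$ plays the role of a fixed budget, and the quantity to be bounded is the data volume that this budget can push through the uplink and downlink. I would begin by observing that when the MU devotes its entire harvested energy to transmission, nothing is processed locally, so the whole task is offloaded: $q_i = l_i = l_m$ and the local term $E_{loc}^i$ in Eq.~(\ref{eq_1}) vanishes. The condition ``all collected energy is used to offload'' then reads $E_h = E_u + E_d$.

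Next I would substitute the model expressions, namely Eq.~(\ref{eq_7}) for $E_h$ and Eqs.~(\ref{eq_4}) and (\ref{eq_5}) for the uplink and downlink energies evaluated at $q_i = l_m$, to obtain
\begin{equation}
    AH\eta k_a = P_u \frac{l_m}{R_u} + P_d \frac{l_m r}{R_d}.
\end{equation}
Factoring out $l_m$ and solving yields the closed form
\begin{equation}
    l_m = \frac{AH\eta k_a}{\dfrac{P_u}{R_u} + \dfrac{P_d r}{R_d}},
\end{equation}
which is the claimed maximum offloading volume. This algebraic part is routine substitution.

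The one step that needs genuine justification — rather than mere rearrangement — is the claim that $l_m$ is actually the \emph{maximum} feasible offload, not just one feasible value. For this I would invoke monotonicity: the total transmission energy $E_u + E_d = \left( \dfrac{P_u}{R_u} + \dfrac{P_d r}{R_d} \right) q_i$ is a strictly increasing linear function of $q_i$, since $P_u$, $P_d$, $r$, $R_u$, and $R_d$ are all positive. Because the harvested budget $E_h$ is fixed, any offload strictly larger than $l_m$ would require transmission energy strictly exceeding $E_h$ and is therefore infeasible, while $l_m$ exhausts the budget exactly; hence $l_m$ is precisely the largest data volume the harvested energy can offload. I expect this feasibility/monotonicity argument, together with the modeling choice to neglect the residual battery term $E_b$ of Eq.~(\ref{eight}) (kept consistent with the treatment in Lemma~\ref{lemma2}), to be the main point to state carefully, as it is what upgrades a balance equation into a genuine upper bound.
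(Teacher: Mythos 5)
Your proof is correct and takes essentially the same route as the paper's: set $E_h = E_u + E_d$, substitute Eq.~(\ref{eq_7}) for the harvested energy and Eqs.~(\ref{eq_4})--(\ref{eq_5}) for the transmission energies evaluated at $l_m$, and solve to get $l_m = \frac{AH\eta k_a}{B}$ with $B = \frac{P_u}{R_u} + \frac{rP_d}{R_d}$. Your added monotonicity argument (transmission energy is strictly increasing and linear in $q_i$, so exhausting the budget at $l_m$ makes it a genuine maximum) is a sound tightening of a step the paper disposes of only by citation, but it is a refinement of the same argument, not a different approach.
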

\begin{proof}
     The maximum amount of data is processed when all the energy is used for offloading \cite{WGS, JZXRZT, MWXYLL}. At this point, we obtain the following expression.
    \begin{equation}
        E_h=E_u+E_d,
    \end{equation}
    \begin{equation}
        AH\eta k_a =P_st_{up} +P_st_{down},
    \end{equation}
    \begin{equation}
        AH\eta k_a =P_u\frac{l_m}{R_u} +P_d\frac{rl_m}{R_d},
    \end{equation}
    \begin{equation}
        l_m=\frac{AH\eta k_a}{B}.
    \end{equation}
    where $B=\frac{P_u}{R_u}+\frac{rP_d}{R_d}$.
\end{proof}

\begin{lemma}\label{lemma4}
    We define the balanced data $q_i^{opt}$  as the optimal amount of offloaded data when the harvested energy is just enough to meet the energy expenditure with the offloading case.
\end{lemma}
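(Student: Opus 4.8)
The plan is to follow exactly the energy-balance template that the proofs of Lemma~\ref{lemma2} and Lemma~\ref{lemma3} already use, but applied to the \emph{partial}-offloading configuration in which a task of size $l_i$ is split into an offloaded portion $q_i^{opt}$ and a locally executed remainder $l_i-q_i^{opt}$. The defining phrase ``the harvested energy is just enough to meet the energy expenditure with the offloading case'' I read as the equality between the harvested energy $E_h=AH\eta k_a$ of Eq.~(\ref{eq_7}) and the total per-slot consumption of the device in this split configuration.

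First I would assemble the total expenditure from the three pieces already derived in the model: the local-execution energy $k(l_i-q_i^{opt})h F_{loc}^2$ for the retained bits (Eq.~(\ref{eq_1}) evaluated at $q_i=q_i^{opt}$), the uplink energy $P_u q_i^{opt}/R_u$ (Eq.~(\ref{eq_4})), and the downlink energy $P_d r q_i^{opt}/R_d$ (Eq.~(\ref{eq_5})). Collecting the two transmission terms with the constant $B=\tfrac{P_u}{R_u}+\tfrac{rP_d}{R_d}$ introduced in the proof of Lemma~\ref{lemma3}, the balance condition collapses to the single linear equation
\begin{equation}
AH\eta k_a = k\,(l_i-q_i^{opt})\,h\,F_{loc}^2 + B\,q_i^{opt},
\end{equation}
which I would then solve for the unknown by grouping the $q_i^{opt}$ terms, giving
\begin{equation}
q_i^{opt}=\frac{k h F_{loc}^2\,l_i-AH\eta k_a}{k h F_{loc}^2-B}=\frac{k h F_{loc}^2\,(l_i-l_c)}{k h F_{loc}^2-B},
\end{equation}
where the second form substitutes $AH\eta k_a=k h F_{loc}^2\, l_c$ from Lemma~\ref{lemma2} to display the dependence on how far $l_i$ exceeds the purely-local budget $l_c$.

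The main obstacle is not the algebra but confirming that this expression is physically admissible and dovetails with the thresholds of the neighbouring lemmas. Offloading is energetically worthwhile only when transmitting a bit costs less than computing it locally, i.e.\ $B<k h F_{loc}^2$; I would invoke this to sign the denominator positive and thereby guarantee $q_i^{opt}>0$ throughout the regime $l_i>l_c$. The decisive consistency check is the pair of boundary limits: as $l_i\to l_c$ the formula yields $q_i^{opt}\to 0$, recovering the purely-local threshold of Lemma~\ref{lemma2}, while imposing $q_i^{opt}=l_i$ forces $l_i=AH\eta k_a/B=l_m$, recovering the maximum-offload value of Lemma~\ref{lemma3}. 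These two limits show that $q_i^{opt}$ interpolates monotonically between the $l_c$ and $l_m$ thresholds branched on in Algorithm~\ref{alg:alg1}, and they are the one point that the bare definition does not make explicit, so I would spend the bulk of the argument there rather than on the one-line solve.
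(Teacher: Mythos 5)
Your proposal is correct and follows essentially the same route as the paper: its proof of Lemma~\ref{lemma4} likewise balances $E_h$ against $E_{loc}+E_u+E_d$ (combining Eqs.~(\ref{eq_1}), (\ref{eq_4}), (\ref{eq_5}), and (\ref{eq_7}) with the offloading indicator $y=1$) and collects the two transmission terms into the same constant $B$ from Lemma~\ref{lemma3}. The only difference is one of completion: the paper stops at the rearranged inequality on $l$ together with the feasibility window $l-l_c \leq q_o^{opt} \leq l_m$, whereas you solve the balance equality through to the closed form $q_i^{opt}=k h F_{loc}^2 (l_i-l_c)/(k h F_{loc}^2-B)$ and check the endpoints against Lemmas~\ref{lemma2} and~\ref{lemma3} --- a formula the paper never writes explicitly, but one that is consistent with (and, given $0<B<k h F_{loc}^2$, implies) the paper's stated constraint.
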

\begin{proof}
   When $l$ $\geq$ $l_c$, the energy consumed by the MU to offload the task includes the energy consumed by executing the task locally $E_{loc}$, the energy consumed by uploading the task $E_u$, and the energy consumed by downloading the result $E_d$.  Combining  Eq. (\ref{eq_1}), Eq. (\ref{eq_4}), Eq. (\ref{eq_5}) and Eq. (\ref{eq_7}),  we can get the following equation.
   \begin{align}
        E_h& \geq E_{loc}+E_u+E_d,\\
       AH\eta k_a& \geq k \cdot (l-yq_i^{opt})h \cdot F_{loc}^2 +y(P_ut_{up}+P_dt_{down}),\\
       l& \leq yq_o+\frac{AH\eta k_a-y(P_ut_{up}+P_dt_{down})}{k \cdot h \cdot F_{loc}^2},\\
       l& \leq yq_o+\frac{AH\eta k_a-yq_i^{opt}(\frac{P_u}{R_u}+\frac{P_dr}{R_d})}{k \cdot h \cdot F_{loc}^2}.\label{thirty-seven}\\
       &\textbf{s.t.} \quad l-l_c \leq q_o^{opt} \leq l_m.\tag{\ref{thirty-seven}{a}}
   \end{align}
\end{proof}

\subsection{The optimal price of the edge server}
The utility of the edge server, $U_{server}$, is equal to the payment of the user minus the penalty function $P$ to discard tasks, which is expressed as the following equations.
\begin{equation}
    \begin{aligned}
        U_{server}&=\sum_{i=1}^{M}W_i-P\\
        &=\frac{h}{F_t} \lg (\prod_{i=1}^{M} (F_i +d)^{q_i})-\gamma\sum_{i=0}^{M}W_i. 
    \label{eq_38}
    \end{aligned}
\end{equation}

\begin{equation}
    \textbf{s.t.}\quad \sum_{i=1}^MF_i \leq F_t.\tag{\ref{eq_38}{a}}
\end{equation}

Appropriate bidding function should be made by the server to guarantee its utility as non-negative. Here, the bidding function is treated as a logarithmic function of $f(x)=\lg(F_i+d)$ \cite{YZBKK, HHJKS}.  The input is regarded as the usage rate of the MU at the server.  Since the bid function must be larger than 0, and the bid function should be 0 when $F_i$ = 0, that is, when the MU is not using the server resources, $d=1$; $F_i$  is the optimal offloading decision of the MU. However,  when $F_i$ is large, the service period profit is correspondingly large, but according to the game rule, MU is not willing to request a large $F_i$, and when $F_i$ is large, the number of users that can be served will be less. That is, as the penalty function increases the server utility becomes low. Therefore, within an acceptable range of $F_i$  to the user, the utilization of server resources is maximized, and server profit is maximized simultaneously. In this case, the optimization problem to maximize the $U_{server}$ based on the optimal policy of the MU is expressed as follows.
\begin{problem}
    Decision on price bidding function
\end{problem}
\begin{equation}
\begin{aligned}
    \label{opt_3}
     U_{server} &= \max(\sum_{i=1}^{M}W_i-P).
\end{aligned}
\end{equation}

\begin{equation}
    \textbf{s.t.}\quad \sum_{i=1}^MF_i = F_t.\tag{\ref{opt_3}{a}}
\end{equation}

To fully utilize the server resources and improve the revenue of the server, we add the restriction that each MU must increase $F_i$ from the original to fully use the server resources if idle resources exist. MU takes up the extra resources of the server according to the proportion of their offloaded data. As the MUs occupy more server resources,  they have to pay more, which implies the revenue of the server is improved. The justification for such a decision by referring to Appendix \ref{price}.
\begin{algorithm}[!t]
    \SetAlgoLined
    \caption{Differential pricing}\label{alg:alg2}
    \KwIn{ $R_u$, $R_d$, $F_{loc}$, $X$, $Y$}
    \KwOut{$U_{server}$, $U_{user}$}
    
    Check the $Y$\label{step1}\;

    \eIf{Y == null}{
        Check the $X$\;
        \If{X==null \textbf{and} $X_s$ == null  \textbf{or} $F_t \leq \epsilon$ \footnotemark}{
        the server didn't provide service;
        }
        \eIf{X==null \textbf{and} $X_s$ != null}{
         \eIf{$F_t>0$}{
            \For{X.u;X!=null;X=X.next}{
                $l_{tot}=X.l$;
            }
            $X_s.F=X_s.F+F_t*(\frac{X_s.l}{l_{tot}})$
            calculate the actual time $t_{fact}$ according to Eq. (\ref{eq_13});
         }{
         calculate the $U_{server}$ according to Eq.  (\ref{eq_38})\;
         calculate the $U_{user}$ according to Eq.  (\ref{eq_25});
         }
        }{
            \eIf{$F_t>X.F$}{
            X.u is served;\;
            $F_t = F_t - X.F$\;
            }{
            \eIf{$t_{req}>X.t$}{
            Y=X\;
            Y=Y.next;
            }{
            X.u is dropped\;
            calculate the P according to Eq. (\ref{eq_24});
            }
            X=X.next\;
            
            }
            
        }

    }
    {
        \While{$F_t \geq Y.F$}{
            Y.u is served\;
            $F_t = F_t - Y.F$\;
            Y = Y.next\;
        }
    }
\end{algorithm}    

\footnotetext[5]{$\epsilon$  is a critical value, when $F_t \leq \epsilon$, the server will not continue to access the next MU and serve the MUs already at the server.}
The joint optimization Problem 1 and Problem 2 are solved by the proposed Algorithm \ref{alg:alg2}, which determines the eligibility of a MU to be served by the server after offloading and calculates the utility for both the server and the user. At the beginning of each time slot, the server first iterates the tasks in  Class 2. If the server meets the computing resources requirement by the task in  Class 2, it processes the task. At this time, the server gives an optimal initial payment according to the amount of data offloaded and the required computing resources, otherwise, the task continues to wait. After iterating the tasks in Class 2, it continues to iterate the tasks in Class 1. If the computing resources of the server cannot meet the computing resources requirement by the task in Class 1, the task is forwarded into Class 2  and waits. Since the server iterates over Class 1 and Class 2, the entire process takes $O(2n)$ time. For each MU already in the server, if the server has any remaining computing resources, we adopt a redistribution strategy, it needs to iterate through the array $X_s$ again,\footnote{$X_s$ represents the array of MUs to be served.} reallocates computing resources and calculates the execution delay, and the time complexity is $ O(n)$. Therefore, the entire time complexity is $ O(3n)$.

\section{Performance evaluation}


To validate the effectiveness of the proposed DDPS scheme, we compare it with four existing schemes: 1) the uniform pricing that the server picks an optimal price from a set of prices and charges all MUs by the price \cite{MY}; 2) the differentiated pricing that the unit price set by the server is inversely proportional to the computing resources requested by the user \cite{MY}; 3) the linear pricing that the unit price set by the server is linear with the computing resources requested by the user \cite{HHJKS}; 4) the nonlinear pricing that the unit price set by the server is a quadratic function of the computing resources requested by the MU \cite{BRHYNA}.
\begin{itemize}  
\item \textbf{Uniform pricing \cite{MY}}: The edge cloud uniformly sets and broadcasts a price $\mu$ to all users, denoted as $\mu = \mu_1 = ... = \mu_K$. Each user determines its optimal offloading strategy by solving the corresponding optimization problem for the given uniform price $\mu$. Meanwhile, the edge cloud determines its optimal price $\mu^*$ based on each MU's offloading decision. Subsequently, the edge cloud sequentially announces prices $\{1/F_{loc}^i\}^k_{k=1}$ to MUs in descending order until the computing power constraint is satisfied, thereby concluding the price negotiation process.

\item \textbf{Differentiated pricing \cite{MY}}: The equation $\mu^*_k = 1/F_{loc}^i$ implies that the optimal price for user $k$, denoted as $\mu^*$, is equal to the reciprocal of its  $F_{loc}^i$. User $k$  chooses  $m_k$ bits of data to offload if its  $F_{loc}^i$ is less than or equal to the threshold $1/\mu^*_k$; Otherwise, it opts for local computation. Consequently, the optimal price $\mu^*_k$ for user $k$ is dependent on its  $F_{loc}^i$, with a higher value assigned when the $F_{loc}^i$ is lower.

\item \textbf{Linear pricing \cite{HHJKS}}: The price function is defined as a linear function, represented by $f(x) = ax + b$, where $x$ represents the proportion of computing resources utilized by users on the server. The determination of values for $a$ and $b$ constitutes the decision variables for the server. By employing differentiated pricing, the server can ascertain an appropriate unit price based on the user's utilization of computing resources, thereby maximizing its utility.

\item \textbf{Nonlinear pricing \cite{BRHYNA}}: The price function  is defined as $P(x) = ax^2 + bx$, where $a$ and $b$ are non-negative parameters, and $x$ represents the computation capability used by the MU. The authors employ this quadratic function to approximate the price function $P(\cdot)$ while satisfying three properties: 1) when no computing service is provided, $P(0)$ equals zero; 2) $P(\cdot)$ exhibits monotonicity; 3) $P(\cdot)$ demonstrates convexity.

\end{itemize}

In summary, the two schemes of uniform pricing and differential pricing are solely dependent on the local computing power, whereas linear pricing and nonlinear pricing are contingent upon the server computational capabilities used by MUs.

\begin{figure}[htp]
\captionsetup{singlelinecheck = false, justification=justified}
\centering
\includegraphics[width=3in]{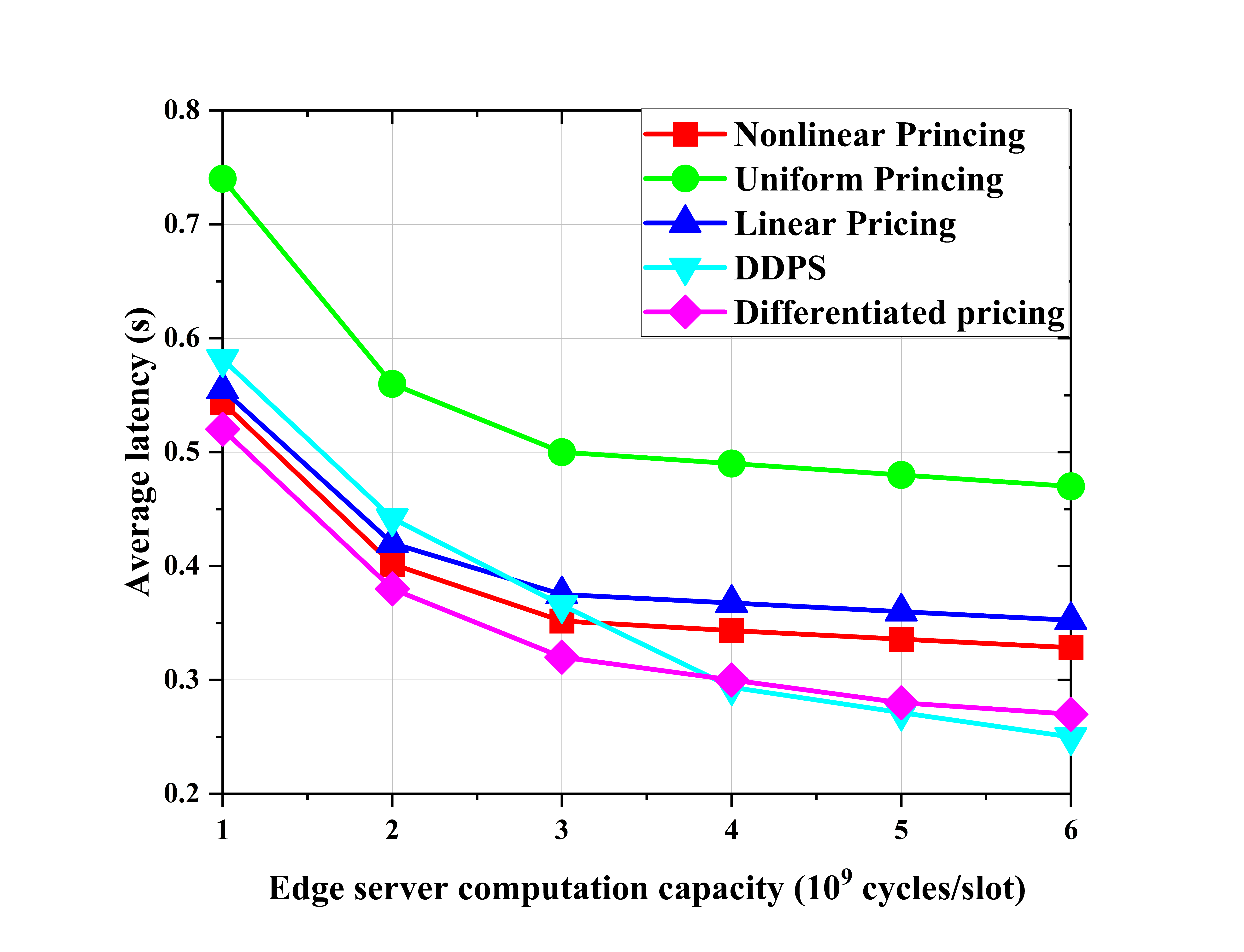}
\caption{Effect of different $\lambda$ on average latency.}
\label{fig3}
\end{figure}

\begin{figure}[htp]
\captionsetup{singlelinecheck = false, justification=justified}
\centering
\includegraphics[width=3in]{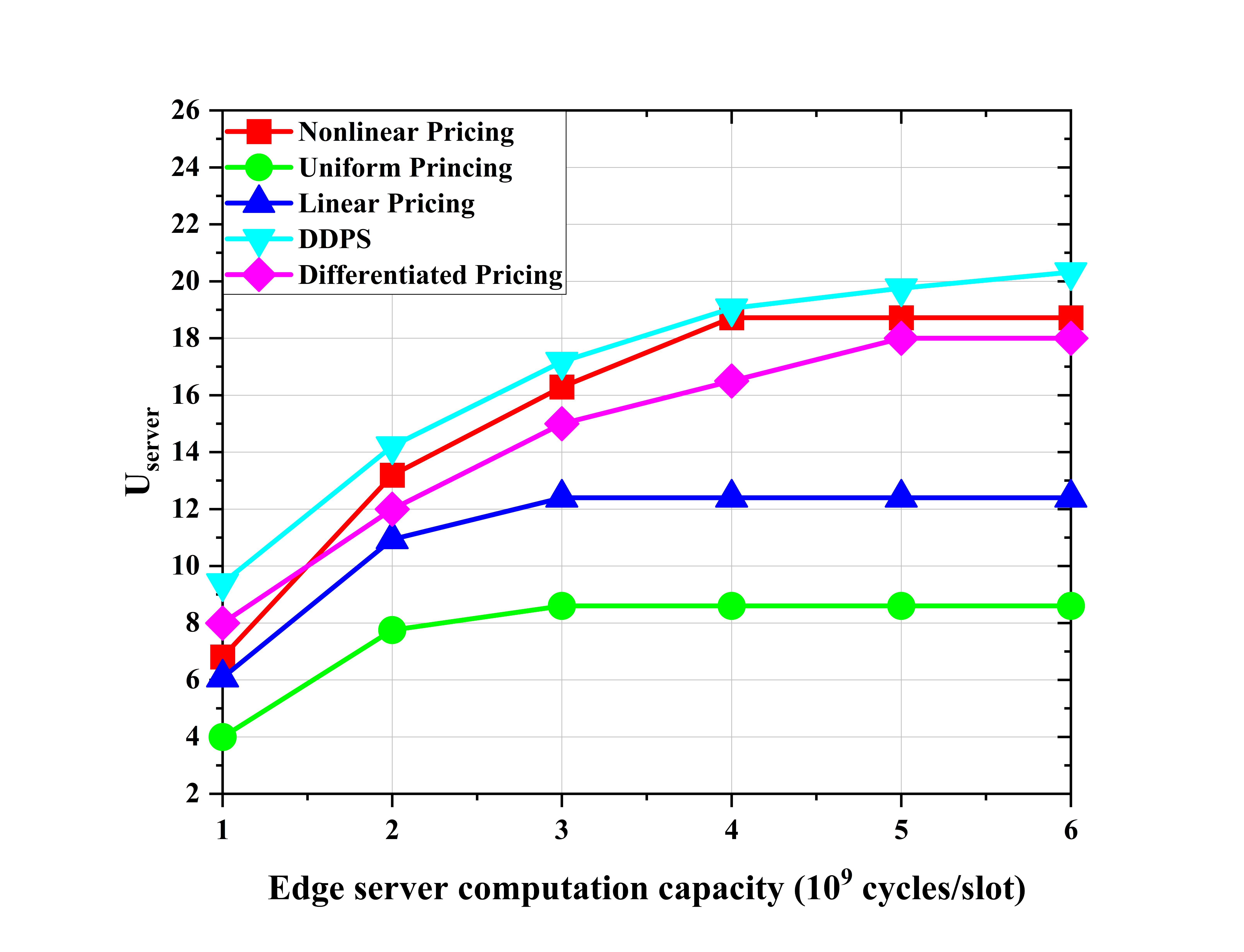}
\caption{Effect of different $\lambda$ on $U_{server}$.}
\label{fig4}
\end{figure}

\subsection{Simulation Setup}
For the simulation setup, we assume that   $\lambda$ of MUs are 0.1, 0.2, 0.3 \cite{HHJKS}. To specify the services provided by MEC servers, we consider face recognition applications that can be executed in the edge cloud.  For the parameters of delay, we rely on some existing works to get the representative values.  Specifically, the transmission delay has a very large variance, and the mean value is assumed to be 100 to 300 ms \cite{MWDPLLJ}.  The execution delay is set as 500 ms \cite{NAKTM}. The local CPU frequency $F_k$ for each MU $i$ is uniformly selected from the set \{0.1, 0.2, ..., 1\} GHz,  the data size for MU $i$ are uniformly distributed with  $l_i$ $\in$ [100, 500] KB. Unless otherwise noted, the remaining parameters are uniformly defined in Table \ref{tabel_2} \cite{MY}.

\begin{table}[!ht]
\begin{center}
\caption{default parameter settings}
\label{tabel_2}
\begin{tabular}{|c|c|c|c|c|c|}  
\hline  
\textbf{Parameter}  & $P_u$  & $P_d$ & $F_t$ & $r$ &  $\lambda$ \\ \hline  
\textbf{Value} & 0.1W & 1W & $6 \times 10^9$ cycles/slot & 0.2 & 0.3 \\ \hline 

\end{tabular}
\end{center}
\end{table}

\subsection{Impact of edge server computation capacity on average execution latency}

Fig. \ref{fig3}  exhibits the impact of the edge server computation capacity on average execution latency. The computation capacity is progressively augmented in increments of $1 \times 10^9$ cycles/slot, up to a maximum of $6 \times 10^9$ cycles/slot. 

As shown in Fig. \ref{fig3}, it is discernible that as the computational capacity of the edge server enhances, there is a concomitant diminution of the average latency across all pricing schemes. When the computing capacity is low, DDPS exhibits a relatively high average delay compared with the other schemes except for uniform pricing. The reason is that according to our Algorithm \ref{alg:alg1}, DDPS  selects the smallest $F_i$ under the maximum delay to reach the optimal offloading decision, which leads to a long average delay.  With the increase of server computing resources, Algorithm \ref{alg:alg2} initially allocates computing resources to MUs, and if any service computing resources remain, the server reallocates resources to delay-sensitive MUs, which accelerates the process of task execution.


In addition, from the perspective of pricing, in the case of low computing resources of the server, which means each MU  obtains low computing resources. Compared with linear pricing and nonlinear pricing schemes, the $log(\cdot)$ function of DDPS pricing is higher, which also suppresses MUs from requesting more computing resources. Similarly, with the increase of computing resources, the $log(\cdot)$ function of DDPS is priced relatively lower, which encourages MUs to request more computing resources to reduce the latency accordingly. For the uniform pricing scheme, no matter how many computing resources are used by MUs, the price is uniform for all MUs, which leads to unfairness to a part of MUs that just need little computing resources. It reveals that when the MU executes tasks locally more, resulting in the highest latency, which accords with the original intention of MEC.

\subsection{Impact of edge server computation on $U_{server}$}

Fig. \ref{fig4}  exhibits the impact of the edge server computation capacity on $U_{server}$. The computation capacity is varied as utilized in Fig. \ref{fig3}.

As shown in Fig. \ref{fig4}, with the increase of server computing resources, the revenue of all pricing schemes increases, and the DDPS scheme has the highest revenue. The utility function ($U_{server}$) of the edge server is used as a substitution for revenue to verify the effectiveness of the proposed scheme, and we run the simulations for 1 minute. The reasons are explained as twofold. On the one hand, DDPS allows each MU to select the least server computing resources under the maximum latency, which means that the server can serve more MUs under the same condition. On the other hand, we implement a resource redistribution strategy, so that the server resources are fully utilized. Existing pricing schemes (e.g., Uniform pricing and Differentiated pricing \cite{MY}) mainly focus on optimizing the cost per CPU cycle, consequently, when the amount of data offloaded requires the same number of cycles, each MU dedicates to occupying more server computing resources to reduce execution time, which leads to sub-optimal utilization of server computing resources. Therefore, the proposed DDPS scheme gains more revenue than other existing schemes.

\begin{figure}[tp]
\captionsetup{singlelinecheck = false, justification=justified}
\centering
\includegraphics[width=3in]{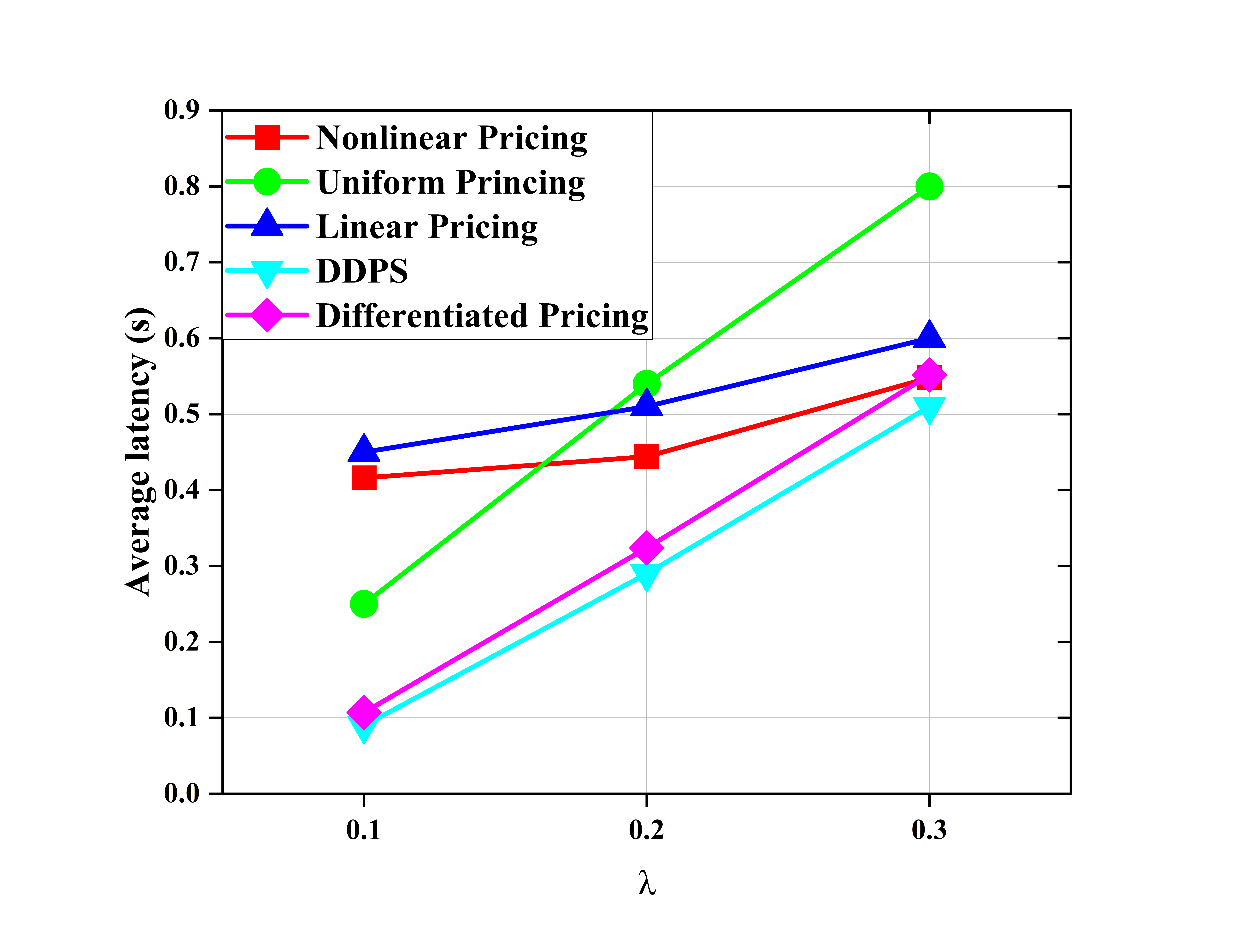}
\caption{Effect of different $\lambda$  on average latency.}
\label{fig5}
\end{figure}

\subsection{Impact of  $\lambda$ on average execution latency}

Fig. \ref{fig5} exhibits the impact of  $\lambda$ on the average execution latency. Here, $\lambda$ are set as 0.1, 0.2, and 0.3, respectively. As observed from Fig. \ref{fig5}, an uptick with $\lambda$ leads to a concomitant increase in the average latency across all pricing schemes, and the DDPS outperforms the other schemes in terms of average execution latency. This is because the redistribution strategy of DDPS makes full use of server resources, that is, each MU can get more computing resources to quickly complete the task execution. In addition, compared with linear pricing and nonlinear pricing, our pricing scheme is also conducive to MU offloading due to the lower price, thereby reducing the delay. Compared with the uniform pricing scheme, the superiority of our scheme is proved by Appendix  \ref{latency}.




\subsection{Impact of $\lambda$ on $U_{server}$}
Fig. \ref{fig6} exhibits the impact of  $\lambda$ on $U_{server}$. $\lambda$ are set as utilized in Fig. \ref{fig5}. The results presented in Fig. \ref{fig6} show that DDPS stably generates the highest revenue with the increase of $\lambda$. Compared with existing schemes, in addition to the high income caused by the redistribution strategy, DDPS also has the advantage of attracting more MUs at a lower price compared with linear and nonlinear pricing.  In addition, DDPS allocates the least amount of computing resources to each MU at the beginning, which also serves more MUs. Uniform pricing has low revenue performance due to the lack of fairness in pricing and the lack of a strategy to maximize revenue.

\begin{figure}[tp]
\captionsetup{singlelinecheck = false, justification=justified}
\centering
\includegraphics[width=3in]{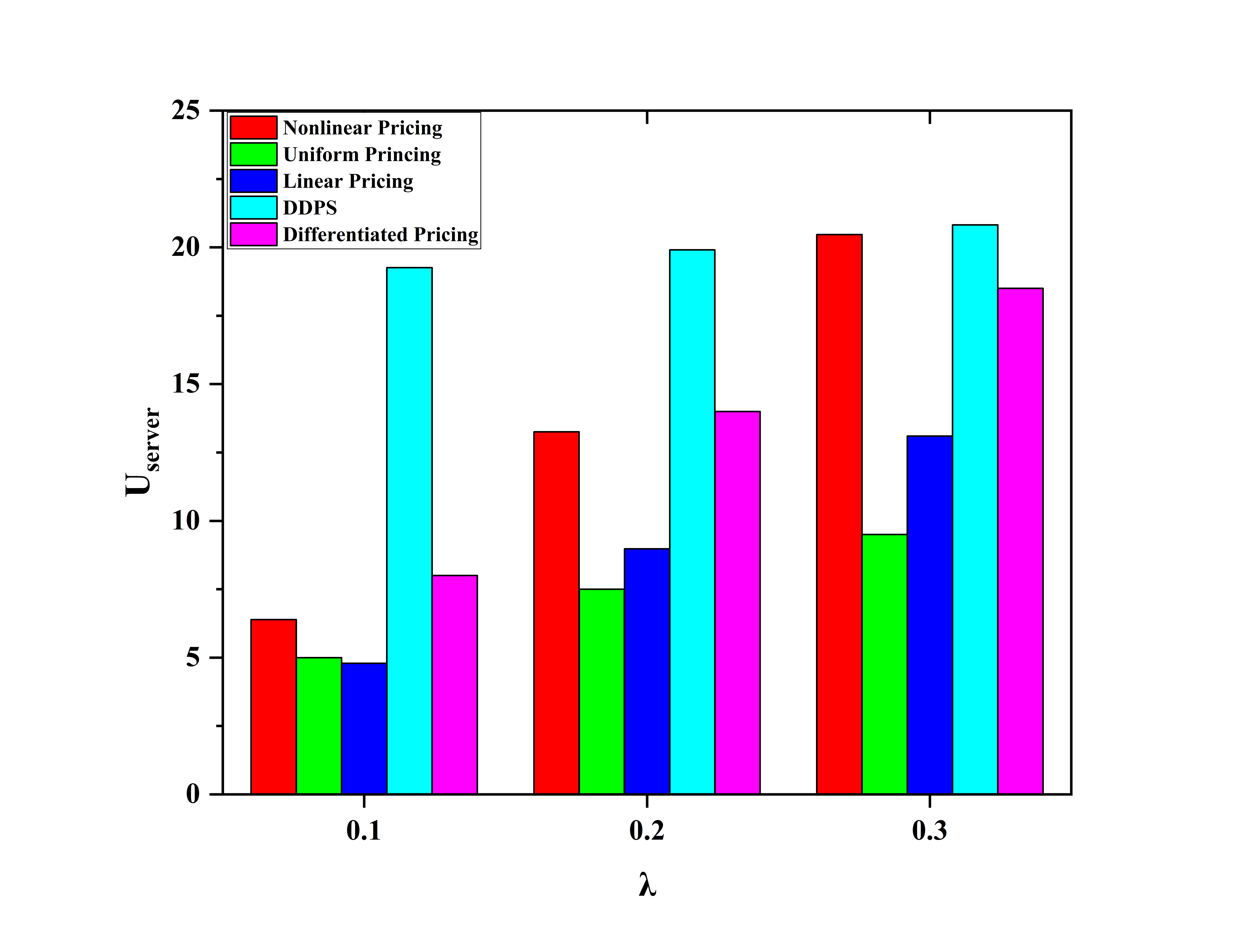}
\caption{Effect of different $\lambda$ on $U_{server}$.}
\label{fig6}
\end{figure}

\subsection{Impact of $\lambda$ on ratio of service}
Fig. \ref{fig7} illustrates the average ratio of service (RoS) for MUs. We define the RoS as a binary value, 1 represents if the MU execution delay requirements are satisfied, and 0 otherwise for each MU \cite{HHJKS}. The figure illustrates that as $\lambda$ increases, the RoS of the proposed scheme remains relatively stable, whereas the others exhibit a declining trend. This is because  Algorithm \ref{alg:alg1} of the DDPS scheme allocates the least computing resources under the premise of the maximum delay of each MU to avoid the situation that some MUs cannot be served. Therefore, compared with the strategy that MUs freely occupy server resources in other schemes (i.e., uniform pricing and differentiated pricing), DDPS serves more MUs.
Notably, the reduction is particularly large for the uniform and differential pricing schemes as $\lambda$ increases. For both pricing schemes, the rapid decrease is attributed to the fact that as the number of MUs increases, they are more likely to compete for more computing resources, which results in serving fewer MUs. However, for the linear and nonlinear pricing schemes, the price constraints prevent MUs from excessive competition for computing resources, thus their RoS is relatively stable.


In addition, the reason of the RoS value is not 1 is that it includes non-offloading case, because they cannot satisfy their requirements for offloading due to the situation such as insufficient server resources or $q_i > l_m$.
\begin{figure}[t]
\captionsetup{singlelinecheck = false, justification=justified}
\centering
\includegraphics[width=3in]{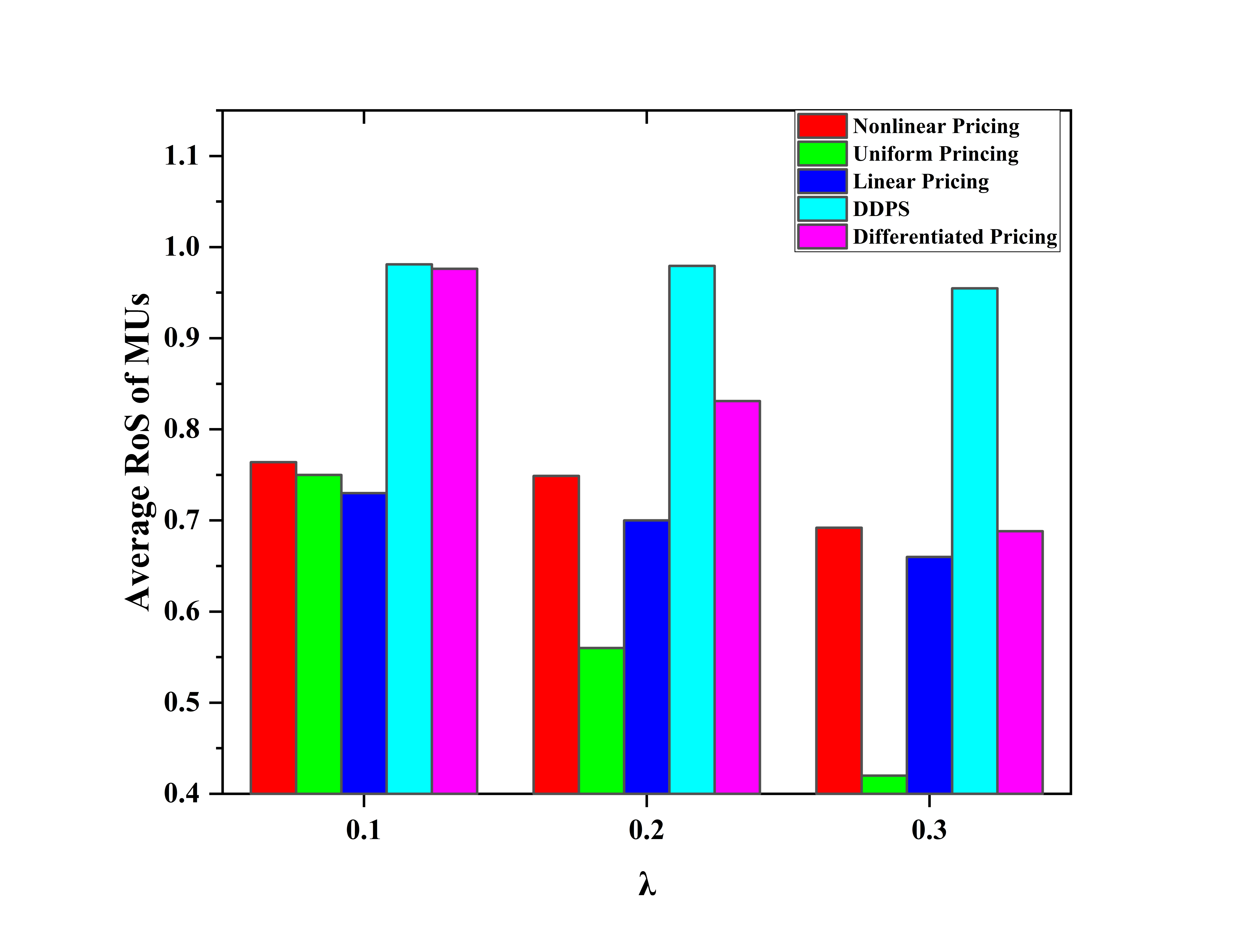}
\caption{Effect of different $\lambda$ on RoS. }
\label{fig7}
\end{figure}

\section{Conclusion and future work}



In this paper, we propose a DDPS pricing scheme as part of a task offloading scenario designed for sensor applications in remote areas. DDPS differentiates MUs based on their usage of server computing resource and the amount of offloaded data, leading to differential unit prices. Based on the proposed DDPS scheme, the optimal amount of data offloading is first determined. After that, the Stackelberg game between the MUs and the server is established considering the computing capacity requirement of the MUs. The goal of the game is to fully utilize the computing resources of the server and maximize the server's revenue. Extensive simulation results demonstrate that the DDPS scheme effectively improves the server revenue and the MUs RoS, and performs well in terms of execution delay. Therefore, through the proposed DDPS scheme, MEC service providers can reduce instances where MUs overuse server resources and serve more MUs with limited resources, ultimately improving operational efficiency. In our future work, we will investigate an efficient resource reallocation-based pricing scheme with budget constraints to enhance task processing efficiency.

\begin{appendices}
\section{} 
\label{price}
Our pricing function is $lg(\cdot)$ function, assuming that the same MU uses computing resources from $10^n$ to $10^m$, then the conversion is equivalent to the unit price from $n$ to $m$, doubled by $m/n$ times, but the computing resources are  $10^m/10^n$, that is, $10 (m-n)$ times. In other words, MU's usage of server computing resources has increased significantly, and the unit price has only changed slightly.

\section{}
\label{latency}
    The uniform pricing scheme exhibits higher latency compared to the proposed scheme.
    In the uniform pricing model, MUs are allocated computation resources corresponding to the amount of data they offload. This implies that the average execution latency is a fixed value. We assume that $\mathcal{N}$ MUs offload the same amount of data, denoted as $\mathcal{L}$, meaning that $\mathcal{N}$ Mus share an equal allocation of computation resources $\mathcal{F}$. The average computation latency is given by the following expression.
    \begin{equation}
        \overline{t_1}=\frac{h\mathcal{L}}{\mathcal{F}/\mathcal{N}}.
    \end{equation}
    
    In the FCFS model we propose, although a time slot can serve multiple MUs, we can consider $\mathcal{K}$ MUs as an entirety. This entirety forms a new FCFS queue with the subsequent entirety having $\mathcal{K}$ MUs. Therefore, the computation latency for the first entirety is $\frac{h\mathcal{L}}{\mathcal{F}/\mathcal{K}}$,\footnote{Here we assume that ${\mathcal{F}/\mathcal{K}}$ is an integer. If it is not an integer, then each MU in the last entirety will receive a larger $F_i$, resulting in lower compute latency. Therefore, we only discuss the case when it is an integer.} the latency for the second member is $\frac{2h\mathcal{L}}{\mathcal{F}/\mathcal{K}}$, and so on. The latency for the \textit{i}th member is thus $\frac{ih\mathcal{L}}{\mathcal{F}/\mathcal{K}}$. Consequently, the average computation latency is as follows.

    \begin{equation}
    \begin{aligned}
        \overline{t_2}&=(\sum_{i=1}^{\mathcal{N}/\mathcal{K}}i\frac{h\mathcal{L}}{\mathcal{F}/\mathcal{K}})/(\mathcal{N}/\mathcal{K}) \\
        &=\frac{h(\mathcal{K}+\mathcal{N})\mathcal{L}}{2F}.
    \end{aligned}
    \end{equation}
    if and only if $\mathcal{N} = \mathcal{K}$, $\overline{t_1} = \overline{t_2}$. Otherwise, $\overline{t_1} > \overline{t_2}$, i.e., the average execution delay of the proposed scheme is less than that of uniform pricing. This completes the proof.
    
\end{appendices}

\begin{IEEEbiography}[{\includegraphics[width=1in,height=1.25in,clip,keepaspectratio]{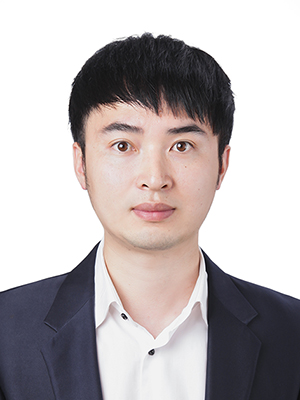}}]{Hai Xue} (Member, IEEE) 
received the B.S. degree from Konkuk University, Seoul, South Korea, in 2014, the M.S. degree from Hanyang University, Seoul, South Korea, in 2016, and the Ph.D. degree from Sungkyunkwan University, Suwon, South Korea, in 2020. He is currently an assistant professor at the School of Optical-Electrical and Computer Engineering, University of Shanghai for Science and Technology (USST), Shanghai, China. Prior to joining USST, he was a research professor with the Mobile Network and Communications Lab. at Korea University from 2020 to 2021, Seoul, South Korea. His research interests include edge computing/intelligence, SDN/NFV, machine learning, and swarm intelligence.
\end{IEEEbiography}

\vspace{-33pt}
\begin{IEEEbiography}[{\includegraphics[width=1in,height=1.25in,clip,keepaspectratio]{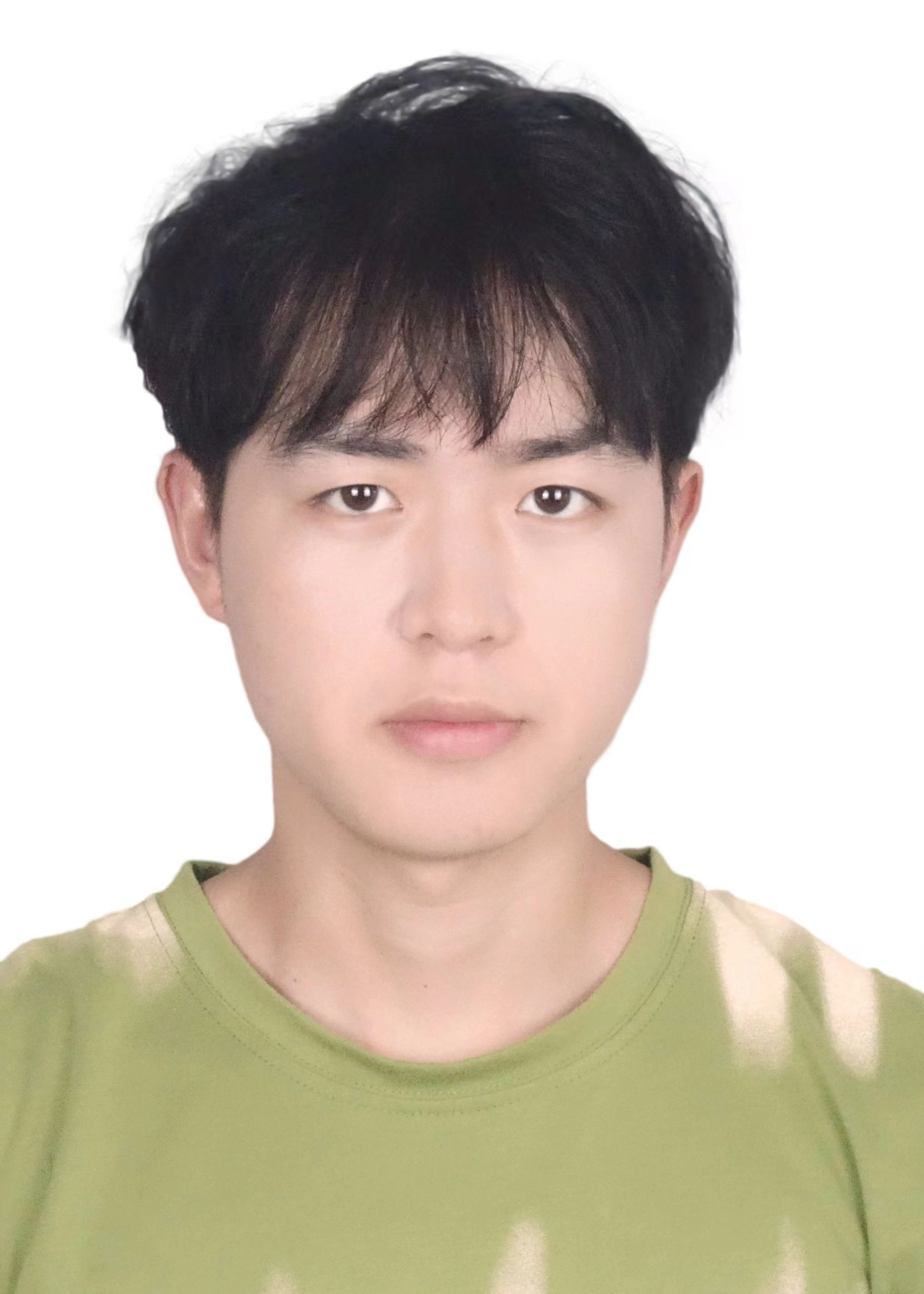}}]{Yun Xia} received the B.S. degree from Shanghai Ocean University, Shanghai, China, in 2022. He is currently working toward the master's degree in computer science and technology with the University of Shanghai for Science and Technology, Shanghai, China. His research focuses on mobile edge computing and pricing strategy.
\end{IEEEbiography}

\vspace{-33pt}
\begin{IEEEbiography}[{\includegraphics[width=1in,height=1.25in, clip,keepaspectratio]{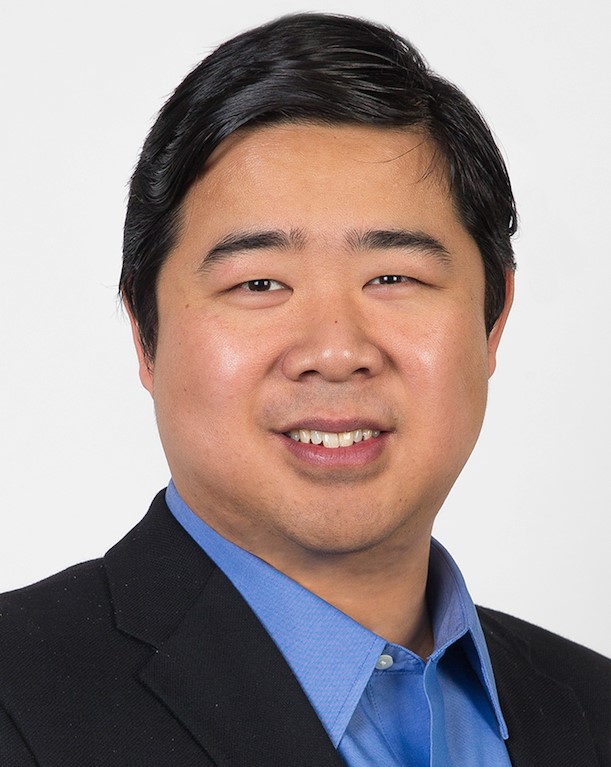}}]{Neal N. Xiong} (S'05, M'08, SM'12) is current a Professor, Computer Science Program Chair, at Department of Computer Science and Mathematics, Sul Ross State University, Alpine, TX 79830, USA. He received his both PhD degrees in Wuhan University (2007, about sensor system engineering), and Japan Advanced Institute of Science and Technology (2008, about dependable communication networks), respectively. Before he attended Sul Ross State University, he worked in Georgia State University, Northeastern State University, and Colorado Technical University (full professor about 5 years) about 15 years. His research interests include Cloud Computing, Security and Dependability, Parallel and Distributed Computing, Networks, and Optimization Theory. 

Dr. Xiong published over 250 IEEE journal papers and over 100 international conference papers. Some of his works were published in IEEE JSAC, IEEE or ACM transactions, ACM Sigcomm workshop, IEEE INFOCOM, ICDCS, and IPDPS. He has been a General Chair, Program Chair, Publicity Chair, Program Committee member and Organizing Committee member of over 100 international conferences, and as a reviewer of about 100 international journals, including IEEE JSAC, IEEE SMC (Park: A/B/C), IEEE Transactions on Communications, IEEE Transactions on Mobile Computing, IEEE Trans. on Parallel and Distributed Systems. He is serving as an Editor-in-Chief, Associate editor or Editor member for over 10 international journals (including Associate Editor for IEEE Tran. on Systems, Man \& Cybernetics: Systems, Associate Editor for IEEE Transactions on Network Science and Engineering, Associate Editor for Information Science, Editor-in-Chief for Journal of Internet Technology (JIT), and Editor-in-Chief for Journal of Parallel \& Cloud Computing (PCC)), and a guest editor for over 10 international journals, including Sensor Journal, WINET and MONET. He has received the Best Paper Award in the 10th IEEE International Conference on High Performance Computing and Communications (HPCC-08) and the Best student Paper Award in the 28th North American Fuzzy Information Processing Society Annual Conference (NAFIPS2009). 
\end{IEEEbiography}

\vspace{-33pt}
\begin{IEEEbiography}[{\includegraphics[width=1in,height=1.25in,clip,keepaspectratio]{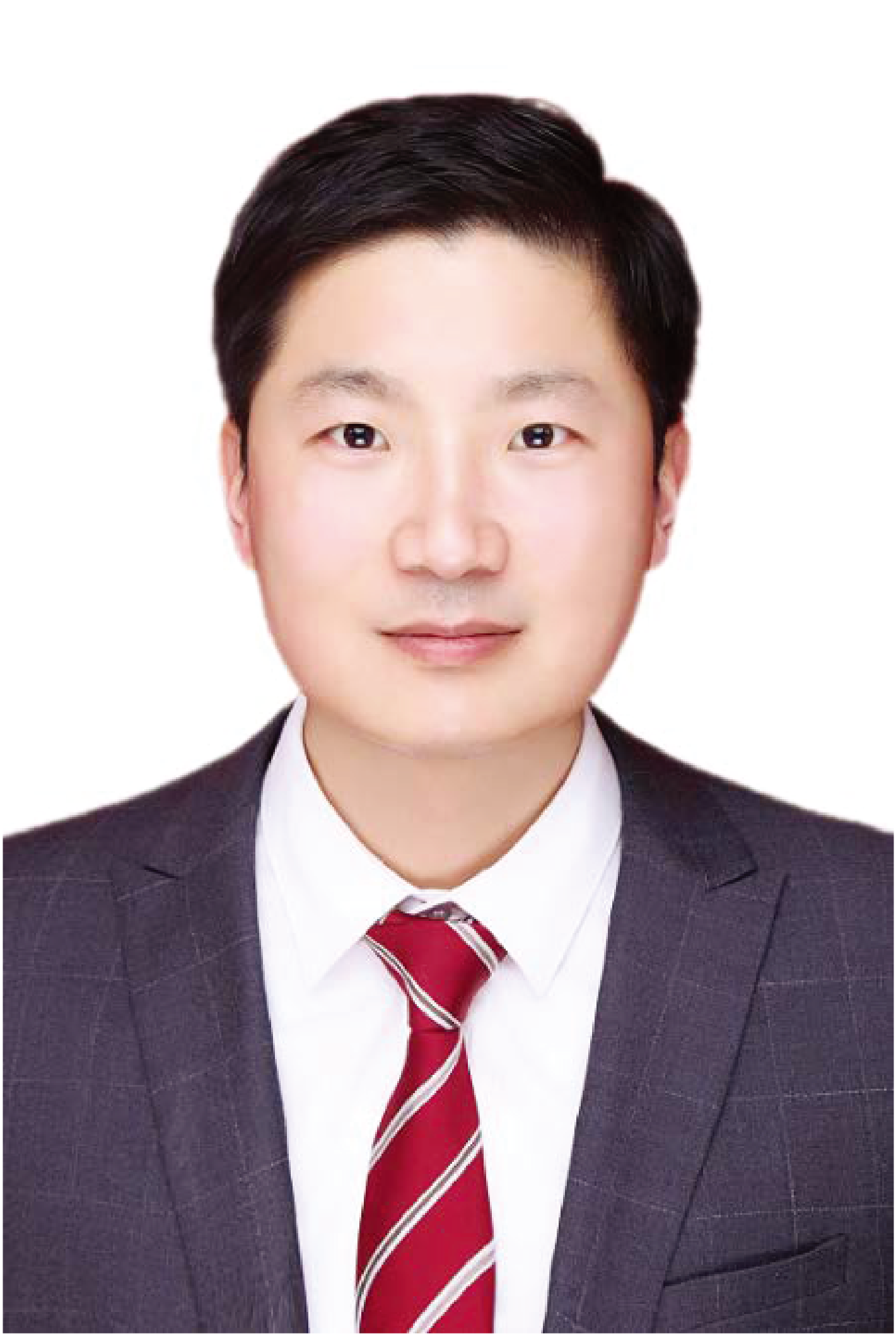}}]{Di Zhang} (S'13, M'17, SM'19)
currently is an Associate Professor at Zhengzhou University. He received his PhD degree from Waseda University. He was a Visiting Scholar of Korea University, a Senior Researcher of Seoul National University, a Visiting Student of National Chung-Hsing University. He is serving as an area editor of KSII Transactions on Internet and Information Systems, has served as the guest editor of \textsc{IEEE Wireless Communications} and \textsc{IEEE Network}. He received the First Prize Award for Science and Technology Progresses of Henan Province in 2023, the First Prize Award for Science and Technology Achievements from Henan Department of Education in 2023, and the ITU Young Author Recognition in 2019. His research interests are the wireless communications and networking, especially the short packet communications and its applications.
\end{IEEEbiography}

\vspace{-33pt}
\begin{IEEEbiography}[{\includegraphics[width=1in,height=1.25in,clip,keepaspectratio]{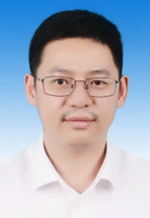}}]{Songwei Pei} (Senior Member, IEEE) received the B.S. in the School of Computer from National University of Defense and Technology, Changsha, China, in 2003, and the Ph.D. in the School of Computer Science and Technology from Fudan University, Shanghai, China, in 2009. He is currently a full professor and leading director of CAPAL laboratory within the Computer Science and Engineering Department at the University of Shanghai for Science and Technology, and he currently works as a Guest Researcher at the Institute of Computing Technology, Chinese Academy of Sciences (2018-). He had studied as Research Fellow at University of California, Irvine from 2013 to 2015, and he studied as a senior visiting scholar at Queensland University of Technology in 2017.

His research interests include intelligent computation, deep learning, heterogeneous multi-core processors, cloud computing, big data, and fault-tolerant computation, etc. He is a distinguished member of CCF, associate editor of FGCS and IEEE ACCESS, the chair of CCF YOCSEF Shanghai (2022-2023).

\end{IEEEbiography}

\end{document}